\newcommand{\Nat}{I\!\!N}
\def\idtt#1{\ensuremath{\mathtt{#1}}}
\def\op#1{\idtt{#1}}
\tikzstyle{box}=[draw, outer sep=0pt, inner sep=0pt ,on chain, font=\small]
\tikzstyle{dottedline}=[ultra thick, loosely dotted,shorten >=1mm, shorten <=1mm]
\newcommand{\field}[4][white]{%
	\node [box,fill=#1] (#2) {#3};
	\node [below=0 of #2] (#2i) {#4};%
}
\newcommand{\connectBelow}[3][0.6]{%
	\draw[-] (#2) |- ++(0,+#1) -| (#3);%
}
\tikzset{paint/.style={ draw=#1!50!black, fill=#1!50 },
	decorate with/.style=
	{decorate,decoration={shape backgrounds,shape=#1,shape size=2mm}}}
\definecolor{slblue}{HTML}{CCFFFF}
\definecolor{tree1}{HTML}{ffb74d}
\definecolor{tree2}{HTML}{aed581}
\definecolor{tree3}{HTML}{e57373}
\definecolor{tree4}{HTML}{7986cb}
\newcommand*{\bb}{N}
\title{Sorting and Ranking of Self-Delimiting Numbers with Applications to  
Outerplanar Graph Isomorphism}
\author{Frank Kammer}{THM, University of Applied Sciences Mittelhessen, Giessen, Germany}{Frank.Kammer@mni.thm.de}{https://orcid.org/0000-0002-2662-3471}{}
\author{Johannes Meintrup}{THM, University of Applied Sciences Mittelhessen, Giessen, Germany}{Johannes.Meintrup@mni.thm.de}{https://orcid.org/0000-0003-4001-1153}{}
\author{Andrej Sajenko}{THM, University of Applied Sciences Mittelhessen, Giessen, Germany}{Andrej.Sajenko@mni.thm.de}{https://orcid.org/0000-0001-5946-8087}{}
\keywords{space efficient, sorting, rank, dense rank, tree isomorphism, forest isomorphism, outerplanar graph isomorphism} 
\authorrunning{F. Kammer, J. Meintrup and A. Sajenko}
\titlerunning{Sorting and Ranking of Self-Delimiting Numbers and Outerplanar Graph Isomorphism}
\begin{document}





\maketitle

\begin{abstract}
    Assume that an $\bb$-bit sequence $S$ of $k$ numbers encoded as Elias gamma
    codes is given as input. We present space-efficient algorithms for sorting,
    dense ranking and competitive ranking on $S$ in the word RAM model with
    word size $\Omega(\log \bb)$ bits. Our algorithms run in $O(k +
    \frac{\bb}{\log \bb})$ time and use $O(\bb)$ bits. The sorting algorithm
    returns the given numbers in sorted order, stored within a bit-vector of
    $\bb$ bits, whereas our ranking algorithms construct data structures that
    allow us subsequently to return the dense/competitive rank of each number
    $x$ in $S$ in constant time. For numbers $x \in \Nat$ with $x > \bb$ we
    require the position $p_x$ of $x$ as the input for our
    dense-/competitive-rank data structure. As an application of our algorithms
    above we give an algorithm for tree isomorphism, which runs in $O(n)$ time
    and uses $O(n)$ bits on $n$-node trees. Finally, we generalize
    our result for tree isomorphism to forests and outerplanar graphs,
    while maintaining a space-usage of $O(n)$ bits. The previous best
    linear-time algorithms for trees, forests and outerplanar graph
isomorphism all use $\Theta(n \log n)$ bits.
\end{abstract}


\section{Introduction}\label{sec:intro}

Due to the rapid growth of input data sizes in recent years, 
algorithmic solutions that optimize both run time and memory usage 
have become increasingly important. For this reason, we focus on {\em space-efficient}
algorithms, i.e., algorithms that run (almost) as fast as standard solutions for
the problem under consideration, while decreasing the required space by more than 
an asymptotic constant.

Graphs are often used to encode structural information in many fields, e.g., in
chemistry or electronic design automation.
Isomorphism testing on these graphs can then be used for
identifying a chemical compound within a large chemical database~\cite{IrnB04}.
Moreover, it is the basis
for a verification whether large electric circuits represented by a circuit
schematic and an integrated circuit layout correspond~\cite{BaiC75}.

{\bfseries Model of Computation.} Our model of computation is the word RAM,
where we assume to have the standard operations to read, write as well as
arithmetic operations (addition, subtraction, multiplication, modulo, bit shift,
\texttt{AND} and \texttt{OR}) take constant time on words of size $w = \Omega(\log \bb)$ bits
where $\bb$ is the input size in bits. (In our paper $\log$ is the binary
logarithm~$\log_2$.) The model has three types of memory. A read-only {\em input
memory}, a read/write {\em working memory} and a write-only {\em output memory}.
The space bounds stated for
our results are for the working memory.

{\bfseries Sorting.} Sorting is one of the most essential concepts for the
design of efficient algorithms and data
structures~\cite{AsanoEK13,BorC82,Chan10,Fre87,Hagerup98,IsaacS56} {for over} 60
years. For an overview of several sorting algorithms see~\cite{Knuth98a}.
Sorting problems are usually classified into different
categories. In {\em comparison sorting} two elements of an input sequence must
be compared against each other in order to decide which one precedes the other.
Pagter and Rauhe~\cite{PagR98} gave a comparison-based algorithm that runs on
input sequences of~$k$ elements in $O({{k^2}/s})$ time by using $O(s)$ bits for
every given $s$ with $\log k \le s \le{k / {\log k}}$. Beame~\cite{Bea91}
presented a matching lower bound in the so-called \emph{branching-program
model}.
Let $[0, x] = \{0, \ldots, x\}$ and $[0, x) = \{0, \ldots, x - 1\}$ for any
natural number~$x$. {\em Integer sorting} asks to sort a
sequence of $k$ integers, each in the range $[0, m)$. It is
known that, for $m = k^{O(1)}$, integer sorting can be done in linear time:
consider the numbers as $k$-ary numbers, sort the digits of the numbers in
rounds (radix sort) and count the occurrences of a digit by exploiting indirect
addressing (counting sort).  
Pagh and Pagter showed optimal time-space trade-offs for integer
sorting~\cite{PagP02}. 
Moreover, Han showed that {\em real sorting} (the given sequence consists of
real numbers) can be converted in $O(k\sqrt{\log k})$ time into integers and
then can be sorted in $O(k\sqrt{\log\log k})$ time~\cite{Han18}.

These algorithms above all assume that the numbers of the input are represented
with the same amount of bits. We consider a special case of integer sorting that
appears in the field of space-efficient algorithms where numbers are often
represented as so-called {\em self-delimiting numbers} to lower their total
memory usage. A self-delimiting number can be represented in several ways. {We
use the following straightforward representation, also known as \textit{Elias
gamma code}. To encode an integer $x>0$ write $\ell = \lfloor \log x \rfloor$
zero bits, followed by the binary representation of $x$ (without leading zeros).
When needed, the encoding can be extended to allow encoding of integers $x \geq
0$ by prepending each encoded $x>0$ with a single bit set to $1$, and encoding
$0$ with a single bit set to $0$. Thus, the self-delimiting number of an integer
$x$ can be stored within $2 \lceil \log x \rceil + 1$ bits. As an example, the
self-delimiting numbers of $0,1,2,3,4$ are $0, 11, 1010, 1011, 100100$,
respectively. Throughout this paper, we assume all self-delimiting numbers are
given as Elias gamma codes.} Our results can be adapted to other types of
self-delimiting numbers, for example, Elias delta and Elias omega codes. The
property we require is the following: let $x_1 < x_2$ be two integers encoded as
self-delimiting numbers $e_1, e_2$, respectively. Then it holds that $e_1$ uses
at most as many bits as $e_2$, and if they use the same number of bits, then
$e_1$ is lexicographically smaller than $e_2$.

Assume that $k$ self-delimiting numbers in the range $[0, m)$ with
$m=O(2^N)$ are stored in an $\bb$-bit sequence. If the memory is
unbounded, then we can simply transform the numbers into integers, sort them,
and transform the sorted numbers back into self-delimiting numbers. However,
this approach uses $\Omega(k \log m)$ bits. For $k \approx \bb \approx m$, this
is too large to be considered space-efficient.
We present a sorting algorithm for self-delimiting numbers that runs in
$O(k+\frac{\bb}{\log \bb})$ time and uses $O(\bb)$ bits.

{\bfseries Ranking.} {Let $S$ be a sequence of numbers. The \textit{competitive
rank} of each $x \in S$ is the number of elements in $S$ that are smaller
than~$x$. The \textit{dense rank} of each $x \in S$ is the number of distinct
elements in $S$ that are smaller than $x$, i.e., competitive rank counts
duplicate elements and dense rank does not.}

{Raman et al.~\cite{RamRR02}  presented a data structure for a given a set
$S\subseteq [0,m)$ of $k$ numbers that uses $\Theta(\log \binom{m}{k})=\Omega(k
\log (m/k))$ bits}
to answer dense rank (and other operations) in constant time. In some sense,
this space bound is ``optimal'' due to the entropy bound, {assuming} we treat
all numbers in the same way. However, the representation of the self-delimiting
numbers differs in their size. E.g., we have a bit vector of $N$ bits
storing self-delimiting numbers such that the vector consists of $\Theta(N)$
numbers where one number is $2^{\Theta(N)}$ and all other numbers are $1$. Then,
the space bound above is $\Omega(N \log (2^{\Theta(N)}/N))=\Omega(N^2)$.

We present an algorithm to compute the {dense/competitive rank} on a sequence
$S$ of length $\bb$ consisting of $k$ self-delimiting numbers in $O(k +
\frac{\bb}{\log \bb})$ time using $O(\bb)$ bits and {subsequently} answer
dense/competitive rank queries of a number $x \in S$ in constant time. For
numbers $x$ of size $>N$ we require the position $p_x$ of $x$ in $S$ as the
input to the respective query.

{\bfseries
Tree, Forest and Outerplanar Isomorphism.}
In the last decade, several space-efficient graph algorithms have been
published. Depth-first search and breadth-first search are the first problems
that were considered~\cite{AsaIKKOOSTU14,BanerjeeNCSRVSR18,ElmHK15,Hag18}. 
Further papers with
focus on space-efficient algorithms discuss graph
interfaces~\cite{BarAHM12,HagKL19,KamM23}, connectivity
pro\-blems~\cite{ChoGS18,Hag18}, matching~\cite{DattaBK16},
separators and treewidth~\cite{IzumiO20,KammerM22,KammerMS22}
and temporal graphs~\cite{HeegerHKNRS21}. Several of these
results are implemented in an open source GitHub project~\cite{KamS18g}. 
We continue this research and present a space-efficient isomorphism algorithm for
trees, based on an algorithm described in the textbook of Aho,
Hopcroft and Ullman~\cite{AhoHU74}, which uses $\Omega(n \log n)$ bits. We
improve the space-bound to $O(n)$ bits while maintaining the linear running
time.
Lindell~\cite{Lindell92} showed that tree isomorphism is in $\textsc{L}$;
however, on $n$-{node} instances the algorithm presented has a running time of~$\Omega(n^2)$:
in one subcase of his
algorithm, two subtrees with $\Theta(n)$ vertices of each of the given trees have to
be cross compared recursively with each other. A runtime of~$\Omega(n^2)$
follows.
We present an $O(n)$-time and $O(n)$-bit tree-isomorphism algorithm that decides if 
two given unrooted unlabeled $n$-node trees
are isomorphic.
Colbourn and Booth~\cite{ColbournBK81} showed a linear time algorithm for
outerplanar isomorphism that uses $\Omega(n \log n )$ bits, and a simpler
algorithm of Beyer and Mitchell~\cite{MitBJ79} for the special case of maximal
outerplanar graphs exists with the same runtime and space usage. We show how a
space-efficient version of both these results can be implemented with the use of our
novel tree-isomorphism algorithm.

{\bfseries Outline.} We continue our paper with some general preliminaries,
followed by our results on sorting and
dense/competitive ranking in Section~\ref{sec:sort-rank}. Afterwards we
introduce definitions and notations in Section~\ref{sec:prelim} as a preparation
for our result on space-efficient tree isomorphism
(Section~\ref{sec:treeiso}),
which we generalize to forests in Section~\ref{sec:forest-iso}. 
Finally, we present our isomorphism algorithm for (maximal) outerplanar graphs is Section~\ref{sec:outerplanar}.

\section{Preliminaries}\label{sec:preone}

All input graphs $G=(V, E)$ 
are assumed to be given via adjacency
arrays, i.e., for each
vertex $u \in V$ we have access to an array $\op{adj}[u]$ that allows us to get a $k$th neighbor $v$ of $u$
via $\op{adj}[u][k]$ in constant time.
A common implementation of this interface is a $2$-dimensional array,
where the labels of the vertices are implicitly defined by the order
of the adjacency arrays, e.g., $\op{adj}[u]$ refers to an array containing
all neighbors of the vertex labeled $u$.

We require some commonly used data structures in the field
of space-efficient algorithms. Firstly, a simple
(dense) rank-select data structure that we use throughout as
a utility. We discuss ranking in more detail in the 
next section. 
\begin{lemma}{(rank-select~\cite{BauH17})}\label{lem:rs} Given access to a sequence $B = (b_1,
	\ldots, b_n) = \{0, 1\}^{n}$ $(n \in \Nat)$ of $n$ bits there is an $o(n)$-bit
	data structure that, after an initialization of $O(n / w)$ time, supports two
	constant-time operations:
	\begin{itemize}
		\item $\op{rank}_B(j) = \sum_{i=1}^{j}b_i$ ($j \in [1, n]$) that returns
	the number of $1$s in $(b_1, \ldots, b_j)$ in $O(1)$ time, and
		\item $\op{select}_B(k) = \min \{j \in [1, n] : \op{rank}_B(j) = k\}$
	that returns the position of the $k$th one in $B$.
	\end{itemize}
\end{lemma}

We use a folklore technique called \emph{static space allocation}, which allows
us to store $\ell$ items of different size in a compact manner, e.g., a sequence
of self-delimiting numbers. The following formal description is adapted
from~\cite{KamKL19}. Let $S$ be a sequence of $\ell$ items $B_k$ such that each
item $B_k$ occupies $d_k$ bits for $k \in \{1, \ldots, \ell\}$ and let
$N=\sum_{k \in \{1, \ldots, \ell\}} d_k$. Using a simple bitstring $B$ of length
$N$ with a bit set to $1$ exactly at all indices $k+\sum_{j=1}^{k-1} d_k$ for
all values $k$, following by constructing a rank-select data structure for $B$,
one can easily gain random-access to each element $B_k \in S$ as obtaining the
index of the first bit that belongs to $B_k$ as $\op{select}_B(k)-k$.
This requires $|S+\ell|+o(|S+\ell|)$ bits in addition to the bits required to store $S$.

A final data structure we require is a \emph{choice dictionary}.

\begin{lemma}{(choice dictionary~\cite{Hag19,KamS18c})}\label{lem:cd} 
	There is a data structure $C$ initialized for universe $U=\{1, \ldots, \ell\}$
	that occupies $\ell+o(\ell)$ bits and is initialized in $O(\ell)$ time.
	The data structure allows the following constant time (per element) operations:
	\begin{itemize}
		\item $\op{insert}(i)$: Insert $x\in U$ into $C$
		\item $\op{remove}(i)$: Remove $x\in U$ from $C$
		\item $\op{contains}(i)$: Return $\op{true}$ exactly if $x \in C$.
		\item $\op{choice}(i)$: Return an arbitrary element $x$ from $C$
		\item $\op{iterate}$: Iterate over all elements $x \in C$.
	\end{itemize}
\end{lemma}

\section{Sorting and Ranking}\label{sec:sort-rank} In this section we consider
sorting and ranking of~$k$ self-delimiting numbers, stored within an $\bb$-bit
sequence~$S$.
We make use of {\em{lookup tables}}, which are precomputed tables storing the
answer for every possible state of a finite universe, typically of small size.
We use such tables to quickly solve sub-problems that occur in our algorithms.
For the rest of this section we assume that a parameter $\tau$ with $\log N \le
\tau \le w$ (where $w$ represents the word size) is given
to our algorithms, which we use as a parameter to construct lookup tables for
binary sequences of length $\lceil\tau/2\rceil$. To give an intuitive example of
such lookup tables, the universe might consist of all integers of size at most
$2^{\lceil\tau/2\rceil}$ and answer queries if the given number is prime. Such a
table has $2^{\lceil\tau/2\rceil}$ entries, and each index into the table
requires $\lceil\tau/2\rceil$ bits. Note that larger values of $\tau$ result in
faster runtimes at the cost of increased space-usage.

In our application for sorting, $S$ is a sequence of $N$ bits containing
self-delimiting numbers. By this, each number in $S$ is of size $m = O(2^\bb)$.
Let $q = 2^{\bb / \tau}$ and call $x \in S$ {\em big} if $q < x$, otherwise call
$x$ {\em small}. We have to handle small and big numbers differently. To divide
the problem we scan through $S$ and write each small number of $S$ into a
sequence $S_{\le q}$ and each big number into a sequence $S_{> q}$. On the
word-RAM model, scanning through an $\bb$-bit sequence $S$ and reporting all $k$
numbers takes $O(k + \bb / \tau)$ time, which is the time bound of all our
sorting algorithms. After sorting both sequences we write the sorted numbers of
$S_{\le q}$, followed by the sorted numbers of $S_{> q}$, into a sequence $S'$ of $\bb$ bits.

{We first consider the sequence $S_{\le q}$.} Our idea is to run first an adaptation of
stable counting sort to {\em presort} the numbers in several {\em areas} such
that an area $A_i$ consists of all numbers that require exactly $i$ bits as
self-delimiting number. By doing so we roughly sort the sequence $S_{\le q}$ as 
all numbers of area $A_i$ are smaller than any number of area $A_j$ for all
$i<j$. We then sort each area independently by another stable counting-sort
algorithm.
\begin{lemma}\label{lem:ordering-self-delimiting} Given an $\bb$-bit sequence
	$S$ consisting of $k$ self-delimiting numbers, each in the range $[0, m)$ ($m
	\le 2^{\bb / \tau}$) and a parameter $\tau$ with $\log \bb \le \tau \le w$,
	{there is a
	stable-sorting algorithm computing a bit sequence of $\bb$ bits that stores the
	given self-delimiting numbers in sorted order.
	The algorithm runs in $O(k + \bb/\tau)+ o(2^{\tau})$ time and $O(\bb)+o(2^{\tau})$ bits.
	}
	\end{lemma}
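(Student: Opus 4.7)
My plan is to realize the two-stage scheme just outlined: a first stable counting sort that partitions $S$ into areas $A_1, A_3, A_5, \ldots$ according to the bit-length $\ell$ of the self-delimiting encoding, and a second stable counting sort that, inside each area, treats the numbers as fixed-width integers. All constant-time decoding of bit windows is handled by a precomputed lookup table whose index width is a constant fraction of $\tau$; this keeps its total bit-size within $o(2^\tau)$ while still letting the scan advance $\Theta(\tau)$ bits per read.

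For the first stage, I would sweep $S$ through the lookup table in $O(\bb/\tau)$ window reads plus $O(1)$ bookkeeping per number, extracting for every self-delimiting number its starting position and its encoding length $\ell \in \{1,3,5,\dots\}$. Since $m \le 2^{\bb/\tau}$, we have $\ell = O(\bb/\tau)$, hence only $O(\bb/\tau)$ distinct areas. A textbook stable counting sort keyed on $\ell$ with an $O(\bb/\tau)$-slot counter array of $O(\log k)$ bits per slot (total $O(\bb)$ bits) then runs in $O(k + \bb/\tau)$ time, and the numbers are written area by area into a fresh $\bb$-bit sequence by a bulk bit-copy driven by the prefix sums.

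For the second stage, the key observation is that inside area $A_i$ every number has encoding length exactly $i$, so after stripping the common prefix $1^{(i-1)/2}0$ each reduces to a fixed-width $(i-1)/2$-bit integer, and comparing these bit strings agrees with numerical order. I would then stable counting sort each area by value, reusing a single count array across areas; the constraint $\sum_i k_i\cdot i \le \bb$ together with the geometric growth of the per-area ranges $R_i \le 2^{(i-1)/2}$ is what keeps the summed cost within $O(k + \bb/\tau)$.

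The delicate point is the space accounting of this second pass. An area with index $i$ close to $i_{\max} \approx 2\bb/\tau$ can have range near $2^{\bb/\tau}$ while containing only a handful of numbers ($k_i \le \bb/i$), so a single dense count array for such an area would already blow the $O(\bb) + o(2^\tau)$ budget. I expect to resolve this by running LSD radix sort with radix at most $2^\tau$ on the wide-but-sparse areas, splitting each $(i-1)/2$-bit value into $O(i/\tau)$ digits so that every area's per-pass cost is bounded by its own share of $k + \bb/\tau$ while the shared count array fits comfortably inside the $o(2^\tau)$ term.
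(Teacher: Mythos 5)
Your proposal is correct and takes essentially the same route as the paper: a stable counting sort keyed on the self-delimiting encoding length to form the areas $A_i$, followed by LSD radix sort with $\Theta(\tau)$-bit digits (stable counting sort per digit) inside each area, with the same $O(\bb)+o(2^\tau)$-bit and $O(k+\bb/\tau)$-time accounting. Only make sure the radix is $2^{c\tau}$ for a constant $c<1$ (the paper uses $2^{\tau/2}$), since a literal $2^\tau$-entry count array would already exceed the $o(2^\tau)$ budget --- which is what your ``index width a constant fraction of $\tau$'' remark already intends.
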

	\begin{proof}
		We first describe how to presort the numbers into areas $A_i$.
		For an easier description, we define a function $f$ that maps each $x\in S$
			to its area (i.e., $x$ is in area $A_{f(x)}$ as follows:
		\begin{align*}
			f(x \in S) = 
			\begin{cases}
				 1 & x = 0\\
				 2\lfloor \log x \rfloor + 1 & x > 0
			 \end{cases}
		\end{align*}
		We start to reorder the numbers by copying them to a new, initially-empty
			 sequence $S'$ of $\bb$ bits to consecutively store the numbers of each
			 area $A_i$ ($i = [1, f(m)]$).
			 We need to assign to each non-empty area $A_i$ an offset $B(i) =
		\sum_{j \in [1, i - 1]} (j - 1)|A_j|$ in $S'$ such that every number of
		every area $j < i$ can be written in $S'$ by one scan over $S$.
		For the offset we require to know the cardinality $C(i) = |A_i|$ of every
		area $A_i$ ($i \in [1, f(m)]$). We realize $C$ as a table with~$f(m)$
		entries of $\lceil \log \bb \rceil$ bits each and use $C[i]$ to count how
		many numbers in $S$ require $i$ bits as a self-delimiting number. For that,
		iterate over the self-delimiting numbers in $S$ and for each number $x \in
		S$ increment $C[f(x)]$ by one. To be able to access the offset $B(i)$ of
		every area in constant time we also realize $B$ as an array of~$f(m)$ fields
		of $\lceil \log \bb \rceil$ bit each. Initialize $B$ by setting $B[1] = 1$
		and $B[i] = B[i - 1] + (i - 1)C[i - 1]$ with $i = 2, \ldots, f(x)$.
		After computing the offsets we use them to write the
		self-delimiting numbers into a new $\bb$-bit sequence $S'$ as follows: scan
		through $S$ and for each self-delimiting number $x$, store $x$ in $S'$
		starting from position $B[f(x)]$ and increment $B[f(x)]$ by~$f(x)$ to adjust
		the boundary for the next number. This completes the
		presorting of the numbers into a sequence $S'$. Note that
		the order $S'$ is stable.
	
		To sort the numbers inside an area we use a folklore technique. We consider
		each self-delimiting number of a number $x$ in an $2^{\tau/2}$-ary system so
		that it is represented by $\lceil f(x) / \lceil \tau /2 \rceil \rceil$
		digits from the alphabet $\mathcal{A}=[0,\lceil 2^{\tau/2}\rceil-1]$, and we
		use radix sort in combination with a stable counting sort
		algorithm. Consider the area $A_i$ as a sequence of $k_i$ subsequently
		stored self-delimiting numbers, each of $i$ bits and occupying $\bb_i$ bits
		in total. We sort the numbers digit-wise, starting with the least
		significant digit $d = 1, \ldots, \lceil 2i/\tau \rceil$. 
		We use a temporary sequence $S_i$ of $\bb_i$ bits to output the sorted
		numbers and subsequently replace $A_i$ in $S$ by $S_i$. We so sort every
		area $A_i$ in $S$.
		{ In detail, the folklore sorting algorithm consists of the
		following steps, repeating in rounds for each digit. Let $d$ be the
		current digit. Build a histogram of the occurrences of each element of
		the alphabet $\mathcal{A}$ at digit $d$. Following that, construct a
		prefix-sum array of the occurrences of each element of the alphabet at
		$d$, also known as the integral of the histogram. The prefix-sum array
		is then used to partially sort the numbers by the current digit.
		Concretely, if we have a number 
		with $d$th digit being 
		$\ell$, then a tuple $(d, \texttt{pre})$ that is stored at the
		$\ell$th position of the prefix-sum array 
		tells us to write  a number 
		at index $\texttt{pre}$ in the temporary sequence $S_i$.
		For each value that is placed in the temporary sequence $S_i$, the
		respective entry $(\ell, \texttt{pre})$ is replaced by $(\ell,
		\texttt{pre}-1)$. Once the numbers are partially sorted for the current
		digit $d$, continue with digit $d+1$ in the next round.
	
		We now describe how to obtain simultaneously the histogram and prefix-sum array for
		all rounds
		of the sorting algorithm for an area $A_i$. 
		We begin with a pre-processing step. Construct
		a table $T$ with $\lceil 2^{\tau/2} \rceil$ entries, one for each element
		of the alphabet $\mathcal{A}$. At each index $\ell \in \mathcal{A}$ of the table $T$ store
		an (initially empty) stack. Each stack at index $\ell$
		will store tuples $(d, \texttt{occ})$ where $\texttt{occ}$ is the numbers
		of occurrences
		of $\ell$ at digit $d$. Only tuples with $\texttt{occ}>0$
		are stored.
	
		For all digits $d$, iterate over all values stored at digit $d$ of
		the numbers of the current area $A_i$. For each element $\ell$ of the
		alphabet that occurs
		at digit $d$, increment the $\texttt{occ}$-value of the tuple on top of the
		stack that is stored at index $\ell$, or insert a $(d, 1)$ if it is the
		first occurrence of $\ell$ at digit $d$, indicated by a missing entry of
		$(d, \cdot)$ at the top of the stack. 
		After both iterations have finished, each stack is
		sorted such that, for two elements $e_1=(d_1, \texttt{occ}_1)$ and
		$e_2(d_2, \texttt{occ}_2)$ on the stack, it holds that $e_1$ comes before
		$e_2$ exactly if $d_1 < d_2$. Note that $d_1 \neq d_2$ for all distinct
		elements $e_1$ and $e_2$ on the stack. 
		
		Next, we show how to construct simultaneously the prefix-sum array for each digit. During
		the construction of the arrays, we initially maintain the prefix-sum arrays
		as lists. In detail, we maintain an array $P$ indexed by values $\ell \in \mathcal{A}$ that stores at
		$P[\ell]$ a list storing tuples $(d, \texttt{pre})$ representing the
		elements of the prefix-sum arrays. During the construction we store in a
		temporary array $P'$ for each digit $d$ the current prefix-sum while
		iterating over all table entries of $T$. Iterate over all entries of $T$,
		and when encountering a non-empty stack at $T[\ell]$, pop all elements of the stack. For
		each $(d, \texttt{occ})$ value removed, increment $P'[d]$ by
		$\texttt{occ}$, and append $(d, P'[d])$ to $P[\ell]$. 
	
		Once this procedure is finished, the prefix-sum array for each $d$ can be
		obtained via the lists stored in $P$. 
		Consider a round of the sorting algorithm where we sort by digit $d$.
		When encountering some $\ell \in \mathcal{A}$ during the iteration over the
		digits of all numbers of the current area, we look at $P[\ell]$
		to find the prefix-sum value $(d, \texttt{pre})$ as the first
		entry of the list stored at $P[\ell]$. As described, we replace
		$(d, \texttt{pre})$ with $(d, \texttt{pre}-1)$ and continue with the round.
		Once a round is finished, we remove the first element of each accessed
		list, i.e., all tuples $(d, \cdot)$. We so have the information
		for digit $d+1$ as the first entries of the lists in $P$ that
		are read during the sorting round for $d+1$.
	%
	
		We now analyze the space-bound and runtime of maintaining the tables and
		prefix-sum arrays over all areas. Note that there are $O(\log \bb)$ areas. For
		each area, we use $O(2^{\tau/2} \tau)$ bits to store the table $T$,
		excluding the space used for the stacks. The $k_i$ numbers of an area $A_i$
		contain $O(k_i+{\bb}_i/\tau)$ digits, and thus we can store the stacks of the
		table $T$ and the arrays $P$ and $P'$ with $O(k_1+N_1/\tau \cdot \tau+k_2 +
		N_2/\tau \cdot \tau+ \ldots)+o(2^{\tau})=O(N)+o(2^{\tau})$ bits. 
		We now analyze the runtime. For each area,
		we construct an empty table $T$ in time $O(2^{\tau/2})$ and scan over all
		digits of all numbers of the current area. The prefix-sum arrays of an area
		$A_i$ are constructed by another linear scan over the tables and digits,
		i.e., in time $O(k_i + {\bb}_i/\tau)$ time. Over all areas, the time spent on
		constructing the prefix-sum arrays is bound by $O(k+{\bb}/\tau)+o(2^{\tau})$.
	} 
	
		We now analyze the space-bound and runtime of the rest of the algorithm.
		The bit-vectors $B$ and $C$ use $O(f(m) \log \bb) =
		O(\bb/\tau \log \bb) = O(\bb/\log \bb \cdot \log \bb) = O(\bb)$
		bits and $S'$ uses $O(\bb)$ bits. 
		Scanning through~$S$, while computing and filling $C$ and $B$ can be done in
			$O(k + \bb/\tau) = O(k + \bb / \log \bb)$ time. (Roughly speaking, we
		can scan $S$ in $O(\bb/\tau)$ time and, to handle each number
		separately, costs us $O(k)$ extra time in total.)
		
		Recall that after presorting the numbers in areas, $A_i$ has $k_i$
		self-delimiting numbers stored subsequently in $\bb_i$ bits. Sorting $A_i$
		runs in $O(\bb_i/\tau + k_i + k_i\frac{i}{\tau})$ time since reading
		$A_i$ and reading every number runs in $O(\bb_i/\tau + k_i)$ time
		and the digit-wise sorting runs in $O(k_i\frac{i}{\tau})$ time. Over all
		areas the running time is {$O(\sum_{i \in [1,f(m)]} (\bb_i/\tau
		+ k_i + k_i\frac{i}{\tau})) = O(\sum_{i \in [1,f(m)]}(\bb_i/\tau + k_i)) +
		O(\sum_{i \in [1,f(m)]} k_i\frac{i}{\tau}) = O(\bb/\tau + k)$ time.}
	
		{In summary, the stable sorting algorithm runs in $O(\bb/\tau+k)+o(2^{\tau})$
		time and $O(\bb)+o(2^{\tau})$ bits.} \qed
	\end{proof}

{Now, let us look at the sequence $S_{> q}$}. The bound on the size of each
number implies that $S_{> q}$ cannot consist of more than $O(\bb/\log q) =
O(\tau)$ numbers and the biggest number may occupy $O(\bb)$ bits. The idea is to
interpret each number in $S_{> q}$ as a string using the
alphabet $\Sigma = \{0,\ldots, 2^{\tau/2}\}$.

Similarly, as in the previous lemma, we first sort the strings by their length
into areas such that each area consists of all self-delimiting numbers of one
length. Afterwards, we sort each area lexicographically using radix sort. Note
that directly lexicographically sorting binary sequences when interpreted as
strings does not result in a sorted sequence of self-delimiting numbers. For
example $0, 11, 1010, 1011$ are sorted self-delimiting numbers, but directly
interpreting them as strings would sort them lexicographically as $0, 1010,
1011, 11$.

\begin{theorem}\label{thm:sdn-sorting} Given an $\bb$-bit sequence $S$ of $k$
	self-delimiting numbers and a parameter $\tau$ with $\log \bb \le \tau \le
    w$ {
	there is a stable-sorting algorithm computing a bit sequence of $\bb$ bits that stores
	the given self-delimiting numbers in sorted order.
    The algorithm runs in time $O(k+\bb/\tau)+o(2^{\tau})$ and $O(\bb)+o(2^{\tau})$ bits.}
\end{theorem}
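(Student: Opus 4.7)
The plan is to lift the restriction $m\le 2^{N/\tau}$ in Lemma~\ref{lem:ordering-self-delimiting}. With $q=2^{N/\tau}$ we scan $S$ once in $O(k+N/\tau)$ time and split it into two $N$-bit sequences $S_{\le q}$ and $S_{>q}$ holding the small and big self-delimiting numbers, respectively. Because every $x\in S_{\le q}$ is strictly smaller than every $y\in S_{>q}$, it suffices to sort each subsequence separately and to write the sorted $S_{\le q}$ followed by the sorted $S_{>q}$ into the output bit vector $S'$. The sequence $S_{\le q}$ satisfies the precondition of Lemma~\ref{lem:ordering-self-delimiting} and can be sorted by one call to it within $O(k+N/\tau)$ time and $O(N)+o(2^\tau)$ bits.

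For $S_{>q}$ the plan exploits the fact that every big self-delimiting number occupies more than $2N/\tau$ bits, so $S_{>q}$ contains at most $k'=O(\tau)$ numbers even though a single one may still occupy $\Theta(N)$ bits. We first presort these $k'$ numbers into length-areas $A_\ell$ so that each area contains only equal-length self-delimiting encodings; then each area is sorted independently by least-significant-digit radix sort, viewing every $\ell$-bit string as a sequence of $\lceil\ell/(\varepsilon\tau)\rceil$ digits over the alphabet $\Sigma=\{0,\ldots,2^{\varepsilon\tau}-1\}$ for a fixed $\varepsilon\in(0,1/2)$. A stable counting sort handles one digit at a time, and iterating the areas in increasing order of length produces sorted $S_{>q}$ since shorter self-delimiting encodings always represent smaller numbers.

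The main obstacle is the bookkeeping, for two reasons. First, the bit-length of a big number ranges up to $N$, so a length-indexed counter array as used in Lemma~\ref{lem:ordering-self-delimiting} would cost $\Theta(N\log N)$ bits; we avoid this by first compressing the at most $k'$ distinct lengths into ranks in $[0,k')$ by sorting a $k'$-entry vector of $\log N$-bit length values, after which only a counter array of $O(k')=O(\tau)$ entries is needed. Second, a naive counting sort would pay $\Omega(2^{\varepsilon\tau})$ per pass just to reset its bucket array, and summed over $\Theta(N/(\varepsilon\tau^2))$ passes this would dominate the cost; we avoid this by maintaining a short dirty list of the bucket indices actually touched during a pass and clearing only those at the pass's end, which reduces the per-pass overhead on area $A_\ell$ to $O(k_\ell')$ with $k_\ell'=|A_\ell|$.

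Adding up, the split of $S$, the invocation of Lemma~\ref{lem:ordering-self-delimiting} on $S_{\le q}$, the presort of $S_{>q}$ by length and the least-significant-digit radix sort across all areas together take $O(k+N/\tau)+O(\sum_\ell k_\ell'\lceil\ell/(\varepsilon\tau)\rceil)=O(k+N/\tau)$ time, using $\sum_\ell k_\ell'\ell\le N$ and $k'=O(\tau)$. The shared counting-sort lookup tables fit in $O(2^{\varepsilon\tau}\tau)=o(2^\tau)$ bits, and together with the sorted buffers the total working space is $O(N)+o(2^\tau)$ bits, as required.
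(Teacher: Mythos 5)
Your plan is correct and follows essentially the same route as the paper's proof: split $S$ at the threshold $q=2^{N/\tau}$, sort the small numbers with Lemma~\ref{lem:ordering-self-delimiting}, treat the at most $O(\tau)$ big numbers as strings over an alphabet of $\epsilon\tau$-bit digits, group them by encoding length and radix-sort each length class stably, then concatenate the two sorted parts. Your bookkeeping choices (rank-compressing the lengths and clearing only touched buckets via a dirty list) differ only at the implementation level from the paper's device of sorting (length, position) pairs by counting sort before moving the long encodings, and do not constitute a different approach.
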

\begin{proof}
The idea is to put the numbers of $S$ that are at most $q = 2^{\bb / \tau}$ into
a sequence $S_{\le q}$ and sort it by using
Lemma~\ref{lem:ordering-self-delimiting}. We write the remaining
$O(\frac{\bb}{\log q}) = O(\tau)$ numbers into a sequence $S_{> q}$ where we
interpret each number as a string of digits out of the alphabet $\Sigma = \{0,
\ldots, 2^{\tau/2}\}$ stored as self-delimiting numbers. To sort
$S_{> q}$ we first sort the numbers by their length in bits into areas and then
sort the areas independently. The difference to the previous lemma is that the
numbers are big, and thus we cannot move a number in constant time. The new
approach comes from string sorting. We create a sequence of pairs, where the
$i$th pair consists of the length in bits and the position of the $i$th number
of $S_{> q}$, respectively. Instead of sorting the numbers, we sort the pairs by
their first component (i.e., by the length in bits) utilizing counting sort. By
sequentially reading the pairs we can access each number in $S_{> q}$ using the
second component of the pair and write all numbers into a sequence $S'_{> q}$ in
order of their length. Now we can sort the numbers area-wise employing
radix-sort. Assume that we sort area $A_i$ that consists of numbers that all
consist of $i$ bits. Again, instead of reading the whole self-delimiting number,
we interpret them as strings of $\lceil i/\lceil \tau /2 \rceil \rceil$ digits
and sort the tuples by the $d = 1, \ldots, \lceil i/\lceil \tau /2 \rceil
\rceil$ digit of the numbers, starting with the least significant digit.
After sorting $S_{\le q}$ and $S_{> q}$ we can write them into a single
$\bb$-bit sequence and return it as the result.

We now consider the space consumption of the algorithm above. The sequence of
tuples uses $O(\tau \log \bb)$ bits. During the algorithm we write the
self-delimiting numbers in $\bb$-bit sequences. 
The application of counting sort on digits of $\lceil \tau/2 \rceil$ bits
requires tables with $2^{\lceil \tau/2 \rceil}$ entries, each of
$O(\tau)$ bits. The tables use $o(2^{\tau})$ bits.
Together with the application of Lemma~\ref{lem:ordering-self-delimiting} our
algorithm uses {$O(\bb) +o(2^{\tau})$ bits in total.}
 
 Sorting the numbers by their length
runs in $O(\tau)$ time using counting sort. Sorting an $\bb_i$ bit area storing
$k_i$ self-delimiting numbers, each of $i$ bits, runs in $O(k_i i/\tau)$ time.
Over all existing areas the algorithm runs in $O(\sum_i k_i i/\tau) = O(k +
\bb/\tau)$ time. {Together with Lemma~\ref{lem:ordering-self-delimiting}
this results in a runtime of $O(k+N/\tau)  + o(2^{\tau})$.} \qed
\end{proof}

We now consider dense/competitive ranking of $k$ self-delimiting numbers
of the universe $\{0, \ldots, m\}$
stored as an $\bb$-bit sequence $S$. 
A standard approach to compute the dense/competitive rank is to first sort~$S$
and then to use an array~$P$ of~$m$ entries, each of $\lceil \log k \rceil$
bits, to store a prefix sum over the occurrences of (different) numbers~$x \in
S$, i.e., in a first step for competitive rank set $P[x] = P[x] + 1$ (for dense
rank, set $P[x] = 1$) for each $x \in S$. In a second step compute the prefix
sums on $P$, i.e., for each $i = 1, \ldots, m - 2$, set $P[i] = P[i - 1] +
P[i]$. The dense/competitive rank of a number $x$ is then $P[x]$. However, the
array $P$ requires $\Theta(m \log k)$ bits. To compute the dense rank with
less space, we can use a bit-vector $B$ of $m$ bits and set $B[x] = 1$ for each
$x \in S$. Then, using a rank-select data structure on $B$, the dense rank of
$x$ is $\op{rank}_B(x)$. This approach uses $O(m)$ bits and takes
{$\Theta(m/w)$} time due to the  
initialization of a rank-select data structure on~$B$~\cite{BauH17}. Note that
this solution does not handle duplicates, due to the use of a bit-vector as the
underlying data structure.

Our novel approach provides a dense/competitive rank solution that does not rely
on the universe size $m$ in both the runtime (for initialization) and bits
required, but only on $N$. Moreover, for our use-case in
Section~\ref{sec:treeiso} we use dense rank with universe size $m=O(2^N)$ for
which the approaches outlined previously require $\Omega(N \log N)$ bits, while
we aim for $O(N)$ bits. Due to this specific requirement we require a novel
approach, which works similar to the realization of dense rank
in~\cite{BauH17,Elias74,Jac89}.

We first discuss dense rank. Similar to our solution for sorting, we handle
small and large numbers differently. Let $S$ be an $\bb$-bit sequence of $k$
self-delimiting numbers. Denote with $S_{\leq \bb}$ the subsequence of $S$ that
contains numbers that are at most size $N$, and with $S_{> \bb}$ all numbers of
$S$ that are larger than $N$ (and at most size $2^N$). We first discuss the
techniques used for enabling dense rank for all small numbers $S_{\leq \bb}$ of
$S$ for which we build our own data structure. Denote with $m'$ the size of the
universe of $S_{\leq \bb}$ and $k'$ the number of self-delimiting numbers
contained in $S_{\leq \bb}$. We construct the dense rank structure not on a
given bit-vector, but on a given sequence consisting of $k'$ self-delimiting
numbers that correspond to the ones in the bit-vector. For this, we construct a
bit vector $B$, which we partition into $O(m'/\tau)$ frames of $\lceil \tau/2
\rceil$ bits and create an array $P$ that contains the prefix sum of the frames
up to the $i$th frame ($i = 0, \ldots, \lceil m' / \lceil \tau /2 \rceil
\rceil$). For an example, see Fig.~\ref{fig:dense-rank-schema}. 
Subsequently, we
use a lookup table that allows to determine the number of ones in the binary
representation of each frame. The table $\op{POPCNT}$ can be easily
constructed in time $O(2^{\tau/2}\tau) = o(2^\tau)$ as follows: for each $z = 0,
\ldots, 2^{\tau/2} - 1$ set $\op{POPCNT}[z] = y$, where $y$ is the number of
bits for $z$ computed in $O(\tau)$ time by iterating over the bits of $z$'s
binary representation.

\begin{figure}[h]
	\centering
	\begin{tikzpicture}[
every node/.style={minimum width=1.32em, minimum height=1.2em, inner sep=0pt,
align=center, fill=white, font=\small},
node distance = 0pt,
unused/.style={pattern=north east lines, pattern color=gray}]

\begin{scope}[name=s1, start chain = going right]%
\node [box, minimum width=0.6cm] (1) {1};
\node [box, minimum width=0.6cm] (2) {1};
\node [minimum width=1.32em, on chain] (dd) {$\ldots$};
\node [box, minimum width=1.1cm] (2b) {64};
\node [box, minimum width=1.1cm] (3) {64};
\node [box, minimum width=1.1cm] (4) {78};
\node [box, minimum width=1.1cm] (5) {99};
\node [box, minimum width=1.1cm] (6) {260};
\node [box, minimum width=1.4cm] (7) {467};
\node [box, minimum width=1.45cm] (8) {3004};
\node [box, minimum width=1.45cm] (9) {3044};
\node [on chain, left=0 of 1] (S) {$S_{\leq \bb}$};
\end{scope}

\def\x{3.63em}

\begin{scope}[start chain = going right, shift={($(1.west)+(0,-0.9cm)$)},anchor=west]%
\node[box, minimum width=\x, fill=white] (q) {1000000};
\node [box, minimum width=\x, unused] (b1) {};
\node[box, minimum width=\x, fill=white] (q2) {1001001};
\node [box, minimum width=\x, unused] (b2) {};
\node [box, minimum width=\x, unused] (b3) {};
\node[box, minimum width=\x, fill=white] (q3) {0100010};
\node [box, minimum width=\x, unused] (b4) {};
\node [box, minimum width=\x, unused] (b5) {};
\node[box, minimum width=\x, fill=white] (q4) {0100100};
\node [on chain, left=0 of q] (B) {$B$};
\end{scope}

\begin{scope}[start chain = going right, shift={($(1.west)+(0,-1.4cm)$)}, anchor=west]%
\node [box, minimum width=\x] (p1) {0};
\node [box, minimum width=\x, unused] (p2) {};
\node [box, minimum width=\x] (p3) {1};
\node [box, minimum width=\x, unused] (p4) {};
\node [box, minimum width=\x, unused] (p5) {};
\node [box, minimum width=\x] (p6) {4};
\node [box, minimum width=\x, unused] (p7) {};
\node [box, minimum width=\x, unused] (p8) {};
\node [box, minimum width=\x] (p9) {6};
\node [on chain, left=0 of p1] (P) {$P$};
\end{scope}

\draw[->] (1) to[out=-90, in=90] ($(p1.west) +  (0.45em, 2.1em)$);
\draw[->] (2) to[out=-90, in=90] ($(p1.west) + (0.45em, 2.1em)$);

\draw[->] (2b) to[out=-90, in=90] ($(q2.west) + (0.45em,0.6em)$);
\draw[->] (3) to[out=-90, in=90] ($(q2.west) +  (0.45em,0.6em)$);
\draw[->] (4) to[out=-130, in=90] ($(q2.west) +  (1.8em,0.6em)$);
\draw[->] (5) to[out=-90, in=90] ($(q2.west) +  (3.2em,0.6em)$);

\draw[->] (6) to[out=-90, in=90] ($(q3.west) +  (0.9em,0.6em)$);

\draw[->] (7) to[out=-90, in=90] ($(q3.west) +  (2.8em,0.6em)$);

\draw[->] (8) to[out=-90, in=90] ($(q4.west) +  (0.9em,0.6em)$);
\draw[->] (9) to[out=-90, in=90] ($(q4.west) + (2.3em,0.6em)$);

\end{tikzpicture}		
	\caption{A sketch of our storage scheme to realize dense rank. {$S_{\leq
	\bb}$} is a sorted sequence of self-delimiting numbers, $B$ and $P$ are
	vectors partitioned into frames. For each $x \in S_{\leq \bb}$ we flip the
	$x$th bit in $B$. Afterwards we count the number of bits for each frame and
	store the prefix sum over the numbers in~$P$. The gray stripped frames are
	uninitialized because no number of $S_{\leq \bb}$ belongs to that
	frame.}\label{fig:dense-rank-schema}
\end{figure}

It remains to show a solution for big numbers $S_{> \bb}$. Note that the dense
rank of any number cannot be bigger than $\bb$ and thus use more than $\log \bb$
bits. On the other hand, $S_{> \bb}$ contains $\leq N / \log N$ numbers. Thus,
we can use an $\bb$-bit vector $Q$ consisting at most $\leq N/\log N$ entries,
each of $\log N$ bits, and if (intuitively speaking) drawing the vector below
$S_{> \bb}$, we can write the dense rank of every number $x \in S_{> \bb}$ with
$\bb < x \le 2^\bb$ into $Q$ below $x$. By requiring that the access to the
dense rank of $x \in S_{> \bb}$ has to be done by providing the position $p_x$
of the first bit of $x$ in $S$ as the input (instead of the number $x$ itself),
we can easily return the dense rank of $x$ in constant time. Note that we need
the position $p_x$ since the binary representation of $p_x$ can be always be
written with $\log N \leq w$ bits, but the binary representation
of $x$ can not. This allows constant time operations.

\begin{theorem}\label{thm:dense-rank} {Given an $\bb$-bit sequence
	$S$ of $k$ self-delimiting numbers, a parameter $\tau$ with $\log \bb \le r
	\le w$ we can compute a data structure realizing dense rank. The data
	structure can be constructed in $O(k + \bb / \tau)+o(2^\tau) = O(k + \bb /
	\log \bb)+o(2^\tau)$ time and uses $O(\bb) + o(2^\tau)$ bits. For numbers
	$x$ of size $>\bb$ the position $p_x$ of $x$ in $S$ is required as the
	input.}
\end{theorem}
\begin{proof}
{In the following we use a lookup table $\op{POPCNT}$ that allows
to query the number of ones in the binary representation of the elements for
which the table was constructed. We construct the table for binary
representations of at most ${\tau/2}$ bits. The table can be easily constructed
in time $O(2^{\tau/2}\tau) = o(2^\tau)$ as follows: for each $z = 0, \ldots,
2^{\tau/2} - 1$ set $\op{POPCNT}[z] = y$, where $y$ is the number of bits for
$z$ computed in $O(\tau)$ time by iterating over the bits of $z$'s binary
representation. The table requires $o(2^\tau)$ time to construct.}

We split the sequence~$S$ of self-delimiting numbers by putting each number~$x
\in S$ with~$x \le \bb$ into a sequence~$S_{\le \bb}$ and each number~$x \in S$
with~$x > \bb$ into a sequence~$S_{> \bb}$. Determine the dense rank for
elements in the sequence~$S_{\le \bb}$ as described in the paragraphs below. For
$S_{> \bb}$ proceed as follows. Create a bit vector $Q$ of length $O(\bb)$
initialized with zeros in constant time~\cite{DBLP:journals/tcs/KatohG22}.
Intuitively speaking, if $Q$ is drawn below $S$ we write the dense rank of a
number $x \in S_{> \bb}$ below~$x \in S$ in $Q$. In detail, we first write $p_x$
of every number $x \in S$ into a bit sequence $I$ and our sorting algorithm
sorts $S_{> \bb}$ such that it also moves the positions $p_x \in I$ with the
numbers in $S_{> \bb}$. Now we determine the dense rank for each number $x \in
S_{> \bb}$ and use the position $p_x$ to write the dense rank in an $\bb$ bit
vector $Q$ at position $p_x$.
Note that the dense rank of the numbers and their positions are at most $\bb$ so
 that storing them uses less space than the numbers themselves. In other words,
 the space for storing several dense ranks and positions in $Q$ and $I$,
 respectively, does not overlap. If afterwards a dense rank is queried, check
 first if it is a number greater than $\bb$. If so, use $Q$ to answer the query
 and, else, use the description below.

Our solution for numbers of {$S_{\leq \bb}$} follows. We start to
describe some auxiliary data structures. Let $B$ be a bit-vector of $\bb$ bits
containing the distinct values of~$S_{\leq \bb}$, i.e., $B[x] = 1$ for
every $x \in S_{\leq \bb}$. Partition $B$ into frames of $\lceil \tau/2 \rceil$
bits each. Let $P$ be an array of $\lceil \bb/(\tau/2)\rceil +1$ entries, each
of $\lceil \tau/2 \rceil$ bits, that allows us to store a prefix sum over the
bits inside the frames of $B$, i.e., $P[0] = 0$ and $P[i + 1]$ ($i = 0, \ldots,
\lceil \bb/\lceil \tau /2 \rceil \rceil$) is the number of ones within $B[0,
\ldots, i\tau/2]$.

We now show how to construct $B$ and $P$. {Apply
Theorem~\ref{thm:sdn-sorting} on $S_{\leq \bb}$ to get a sorted sequence
$S'_{\leq \bb} = \{x_0, \ldots, x_{k - 1}\}$.} Let $j$ (initially $j = 0$) be a
counter that we use to count different numbers in $S'_{\leq \bb}$. Initially set
$P[0] = 0$. Now for each $x_i$ with $i = 0, \ldots, k - 1$ do the following: if
and only if $B[x_i] = 0$, increment $j$ by one. In any case, set $B[x_i] = 1$
and $P[\lfloor x_i / \lceil \tau /2 \rceil \rfloor + 1] = j$.

We can answer the dense rank of a number $x \in S_{\leq \bb}$ by
 returning $P[q] + \op{POPCNT}[B[q\tau/2, \ldots, (q + 1)\lceil \tau /2 \rceil -
 1] \& (1 \ll (\lceil \tau /2 \rceil - p)) - 1]$ with $q = \lfloor x / \lceil
 \tau /2 \rceil \rfloor - 1$ and $p = x \mod \lceil \tau/2 \rceil$.
Initializing the memory of $B$ and $P$ can be done in constant time. Sorting $S$
(Theorem~\ref{thm:sdn-sorting}) and computing $B$ and $P$ can be done in $O(k +
\bb / \tau)$ time. Both, $B$ and $P$ use $O(\bb)$ bits. The lookup table
requires $O(2^{\tau})$ bits. In total, we use $O(\bb)$ bits.  
\end{proof}

To compute the competitive rank we require the information of how many times an
element appears in the given sequence. We change our approach of the previous
lemma as follows: recall that $\tau$ is a {parameter} with $\log
\bb \le \tau \le w$. We sort~$S$ to get a sorted sequence~$S'$. Next, we
partition $S'$ into {\em regions} such that the $i$th region $\mathcal R_i = S'
\cap [i\lceil\tau/2\rceil, \ldots, (i + 1)\lceil \tau / 2 \rceil]$ for all $i =
0, \ldots, 2 \lceil m / \lceil \tau /2 \rceil \rceil - 1$. In detail, we go
through $S'$ and store for each non-empty region $\mathcal R_i$ a pointer $F[i]$
to a sequence $A_i$ of occurrences of each number $x \in S'$ written as
self-delimiting numbers. Similar to the usage of $B$ for dense rank  (Fig.~\ref{fig:dense-rank-schema}), 
we solve competitive rank by partitioning
$A_i$ into frames of~$\lceil \tau/2 \rceil$ bits and computing an array $P_i$
storing the prefix-sums. Inside a single frame we use a lookup table to compute
the prefix sum. More exactly, $P_i[j]$ stores the prefix-sum over all
self-delimiting numbers in $S'$ up to the $j$th frame in $A_i$.
Fig.~\ref{fig:rank-schema} sketches an example. We so obtain the next theorem
where we again require the position $p_x$ in $S$ of all $x>\bb$ as the
input to our competitive rank query.
 
\begin{figure}[h]
	\centering
	\begin{tikzpicture}[
every node/.style={minimum width=1.32em, minimum height=1.2em, inner sep=0pt,
align=center, fill=white, font=\small},
node distance = 0pt,
unused/.style={pattern=north east lines, pattern color=gray}]
\begin{scope}[name=scope1, start chain = going right]%
\node [box, minimum width=11cm] (1) {
$\langle 2 \rangle$$\langle 2 \rangle$$\langle 2 \rangle$$\langle 2 \rangle$$\langle 2 \rangle$$\langle 2 \rangle$
$\langle 5 \rangle$$\langle 5 \rangle$$\langle 5 \rangle$
$\langle 6 \rangle$
$\langle 8 \rangle$$\langle 8 \rangle$$\langle 8 \rangle$
$\langle 256 \rangle$$\langle 300 \rangle$$\langle 300 \rangle$$\langle 300 \rangle$
$\langle 1012 \rangle$
$\langle 1012 \rangle$
};

\node[shift={($(1.west) + (-0.6em,1.3em)$)}, fill=none] (r1) {};
\node[shift={($(1.west) + (17.9em,1.3em)$)}, fill=none] (r2) {};
\node[shift={($(1.west) + (30.2em,1.3em)$)}, fill=none] (r3) {};
\node[shift={($(1.east) + (0.7em,1.3em)$)}, fill=none] (r4) {};
\path[draw] (r1) -- node[midway] {$\mathcal R_1$} ($(r2) - (2pt, 0)$);
\path[draw] ($(r2) + (2pt, 0)$) -- node[midway] {$\mathcal R_8$} ($(r3) - (27pt, 0)$);
\path[draw] ($(r3) - (24pt, 0)$) -- node[midway] {$\mathcal R_{10}$} (r4);

\node [on chain, left=0 of 1] (S) {$S'$};
\end{scope}

\begin{scope}[name=s2, start chain = going right, shift={($(1.west)+(0,-1.1cm)$)},anchor=west]%
\node [draw, minimum width=11.6cm, unused] (b) {};
    
\node[draw, minimum width=3.96em, fill=white] (q) {};
\node[draw, minimum width=3.96em, fill=white, right=15em of q] (q2) {};
\node[draw, minimum width=3.96em, fill=white, right=4.8em of q2] (q3) {};
    
\node [above=0 of q, fill = none] () {1};
\node [above=0 of q2, fill = none] () {8};  
\node [above=0 of q3, fill = none] () {10};  
    
\node [on chain, left=0 of b] (B) {$F$};
\end{scope}

\begin{scope}[start chain = going right, shift={($(1.west)+(0,-1.9cm)$)}, anchor=west]%
\node [box, minimum width=3.96em] (a1) {
$\langle 6 \rangle$$\langle 3\rangle$
};
\node [box, minimum width=3.96em] (a2) {
$\langle 1 \rangle$$\langle 3 \rangle$
};
\node [on chain, left=0 of a1] (P) {$A_1$};
\end{scope}

\begin{scope}[start chain = going right, shift={($(1.west)+(0,-2.4cm)$)}, anchor=west]%
\node [box, minimum width=3.96em] (p1) {0};
\node [box, minimum width=3.96em] (p2) {9};
\node [on chain, left=0 of p1] (P) {$P_1$};
\end{scope}

\begin{scope}[start chain = going right, shift={($(1.west)+(19em,-1.9cm)$)}, anchor=west]%
\node [box, minimum width=3.96em] (a2) {
$\langle 1 \rangle$$\langle 3 \rangle$
};
\node [on chain, left=0 of a2] (P) {$A_8$};
\end{scope}

\begin{scope}[start chain = going right, shift={($(1.west)+(19em,-2.4cm)$)}, anchor=west]%
\node [box, minimum width=3.96em] (p1) {14};
\node [on chain, left=0 of p1] (P) {$P_8$};
\end{scope}

\begin{scope}[start chain = going right, shift={($(1.west)+(27.75em,-1.9cm)$)}, anchor=west]%
\node [box, minimum width=3.96em] (a3) {
$\langle 2 \rangle$
};
\node [on chain, left=0 of a3] (P) {$A_{10}$};
\end{scope}

\begin{scope}[start chain = going right, shift={($(1.west)+(27.75em,-2.4cm)$)}, anchor=west]%
\node [box, minimum width=3.96em] (p1) {18};
\node [on chain, left=0 of p1] (P) {$P_{10}$};
\end{scope}

\draw[->] (q.center) to[out=-90, in=90] ($(a1.north) +  (0em,0)$);
\draw[->] (q2.center) to[out=-90, in=90] ($(a2.north) +  (0em,0)$);
\draw[->] (q3.center) to[out=-90, in=90] ($(a3.north) +  (0em,0)$);

\end{tikzpicture}		
	\caption{A sketch of our storage schema to realize competitive rank. For
	each region $\mathcal R_i$, that contains numbers out of $S'$, a pointer
	$F[i]$ points to a data structure storing the amount of occurrences for each
	of the numbers. In addition, $P_i$ stores the prefix-sum over the frames up
	to $A_i$. For numbers $x$ of size $>\bb$ the position $p_x$ of $x$ in $S$ is
	required as the input.}\label{fig:rank-schema}
\end{figure}
	
\begin{theorem}\label{th:rank} {Given an $\bb$-bit sequence $S$ of $k$
	self-delimiting numbers, a parameter $\tau$ with $\log \bb \le \tau \le w$
	we can compute a data structure realizing competitive rank. The data
	structure can be constructed in $O(k + \bb / \tau) + o(2^{\tau}) = O(k + \bb
	/\log \bb) + o(2^{\tau})$ time and uses $O(\bb) + o(2^{\tau})$ bits. }
\end{theorem}
\begin{proof}
We make use of a table $\op{PREFIXSUM}$ that stores for all
binary sequences of length $\lceil \tau/2 \rceil$ the answer to all prefix-sum queries.
The table can easily be computed in time $o(2^{\tau})$ and stored with
$o(2^{\tau})$ bits.

We can handle numbers in the range of $[\bb, 2^{\bb})$ the same way we did in
Theorem~\ref{thm:dense-rank}, that is, storing the competitive rank of a number
$x$ in an extra  $O(\bb)$-bit sequence $I$ starting at bit $p_x$, where $p_x$ is the
position of $x$ in $S$. Thus, it suffices to consider the case that the given
self-delimiting numbers are all in the range $[0, \bb)$.

	We start to describe our storage schema. Partition $[0, \bb)$ into $\lceil
	\bb/\tau \rceil$ regions where each region $\mathcal R_i = [i \lceil \tau /
	2 \rceil, (i + 1)\lceil \tau/2 \rceil)$ (with $i = 0, \ldots, 2 \lceil \bb /
	\lceil \tau/2 \rceil \rceil - 1$) consists of $\lceil \tau/2 \rceil$
	different numbers.
	Create an array $B$ of $\lceil \bb / \lceil \tau / 2\rceil \rceil$ fields,
	each of $\lceil \tau/2 \rceil$ bits. For each region $\mathcal R_i$ with
	$\mathcal R_i \cap S = \emptyset$, $B[i]$ contains a \op{null} pointer.
	Otherwise, $B[i]$ points to a data structure $Q_i$ that contains all
	information to determine the competitive rank of each number of $\mathcal
	R_i \cap S$.
	
	The data structure $Q_i$ consists of two auxiliary structures $A_i$ and
	$P_i$ defined in the next paragraphs. Let $A_i$ be a sequence of
	self-delimiting numbers consisting of the number of occurrences of each
	number $x \in \mathcal R_i$ in $S$, stored via
	static space allocation (Section~\ref{sec:preone}). Note that, if a number $x \in \mathcal
	R_i$ does not appear in $S$, we store $0$ as the number of its occurrences
	in $S$. Partition $A_i$ into frames of $\lceil \tau/2 \rceil$ bits each.
	
	A frame can contain multiple self-delimiting numbers. If this is the case,
	we want to access the prefix sum over the self-delimiting numbers inside the
	frame, up to a desired number, quickly. To do so we use the precomputed
	lookup table $\op{PREFIXSUM}$. However, to simplify the
	process we want no self-delimiting number to be divided between two frames.
	Therefore, we store each self-delimiting number in $A_i$ so that it either
	fits into the current not full frame, or it uses the next one. Of course it
	is not possible to store a large number (of more than $\lceil \tau/2 \rceil$
	bits) into a single frame. In this case a frame contains only one number and
	we can read it directly without using table lookup. If the number is too
	large to fit into a single frame, we use the next three frames to store the
	number in a standard binary representation. In detail, use the next free
	frame to store only ones and so mark the subsequent two frames to store the
	number in a binary representation instead of a self-delimiting number.

	To find the index of the first bit of the number of
    $\mathcal R_i$ inside $A_i$, we use static space allocation, as described in
    Section~\ref{sec:preone}. Moreover, we use an array $P_i$ of the same size
    as $A_i$ partitioned into frames of size $\lceil \tau/2 \rceil$, where
    $P_i[\ell]$ is the prefix-sum up to the $\ell$th frame of $A_i$, including
    the prefix sum of the numbers in the regions $0 \ldots i - 1$. For an
    example, see again Fig.~\ref{fig:rank-schema}.

	With the data structures defined above we can answer the competitive rank of
	a number $x \in S$ as follows: compute the region $i = \lfloor x / \lceil
	\tau /2 \rceil \rfloor$ of $x$. Use the pointer $F[i]$ to jump to a data
	structure $Q_i$. The number $x$ is the $q$th number inside the region
	$\mathcal R_i$ with $q = x \mod (\tau / 2)$. Select the
	starting position $p$ of the $q$th number in $A_i$, accessible via static
	space allocation. The position $p$ lies
	inside the $f$th frame with $f = \lfloor p / \lceil \tau /2 \rceil \rfloor$.
	Read the $f$th frame as a variable $z$ and check if the self-delimiting
	number $A_i[q]$ occupies the whole frame, i.e., check if $A_i[q]$ encodes
	the start of a large number. If not, remove all numbers inside $z$ after the
	number $A_i[q]$ and return the competitive rank of $x$ as $P[f] +
	\op{PREFIXSUM}[z]$. Otherwise, read the large number as $z$ and return the
	competitive rank of $x$ as $P[f] + z$

	We now describe the construction of our data structure. Apply
	Theorem~\ref{thm:sdn-sorting} on $S$ to get a sorted sequence $S'$.
	Initialize $F$ with $\lceil \bb / \lceil \tau /2 \rceil \rceil$
	zero-entries, each of $\tau/2$ bits, and let $j = 0$ be the largest
	competitive rank so far. Iterate over $S'$ and for each number $x$ check if
	$F[i]$ with $i = \lfloor x / \lceil \tau /2 \rceil \rfloor$contains a
	pointer to a data structure $Q_i$. If not, initialize $Q_i$ as follows: as
	long as the numbers are part of the same region $\mathcal R_i$, run through
	$S'$, and count in $y$ the different numbers in that region. Since $\mathcal
	R_i$ contains $y \le \lceil \bb/r \rceil$ different numbers out of $S$, we
	can temporarily afford to use a key value map $D$ of $\lceil y/\lceil \tau
	/2 \rceil \rceil$ entries, each of $\tau$ bits, to count the number of
	occurrences of each number $x \in \mathcal R_i$ in $S$.
	Let $\#(x, S)$ be the number of occurrences of a number $x$ in $S$.
	
	In detail as long as the next number in $S'$ is part of $\mathcal R_i$, run
        through $S'$ and, for each number $x$ found in $S'$, store the number of
        its occurrences $\#(x, S)$ as $D[x] = \#(x, S)$. Afterwards, scan
        through $D$ and determine the number $\bb'$ of bits required to
        represent all  numbers in $D$ as self-delimiting numbers. Create $Q_i$,
        i.e., allocate $2\bb' + \tau/2$ bits for the static space allocation of
        $A_i$ as well as create an array with $\lceil (3\bb' + \tau/2)/\lceil
        \tau /2 \rceil \rceil$ entries of $r$ bits for $P_i$. Fill $A_i$ with
        the numbers of $D$ as follows: take the difference $d$ between the
        number $z$ (initially 0) and the next key number $z'$ of $D$, jump over
        $d$ bits of $A_i$ and store the number $D[z']$ in $A_i$ such that a
        number either fits into the current frame of $A_i$, the next, or if the
        number is too large to fit into a frame, store it as described above.
        Moreover, sum up all written numbers in a global variable $j$ (initially
        0), i.e., set $j = j + D[z']$, which is the largest prefix sum so far.
        Use it to fill $P_i$ every time a frame $\ell$ is about to be written,
        i.e., set $P_i[\ell] = j$. If the iteration over the numbers of the
        region $\mathcal R_i$ are finished in $S'$, delete $D$ and continue the
        iteration over $S'$.
	
	We now analyze the space complexity of the data structures. The array $F$
	and the sequence $S'$ use $O(\bb)$ bits. Recall that $F[i]$ contains a
	\op{null} pointer if $S$ does not contain any number of region $\mathcal
	R_i$. Otherwise, a structure $Q_i$ is created. We now focus on the space
	complexity of $Q_i$. We store in $A_i$ only the amount of occurrences of
	each different number of $\mathcal R_i$ in $S$. Observe that, if a number
	occurs $z$ times, then $O(\log z)$ bits suffice to store the amount of
	occurrences as a self-delimiting number (each occurrence contributes $O(1)$
	bits). However, some numbers of the region $\mathcal R_i$ do not occur, but
	we represent them anyway as a $0$ in $A_i$, which takes us $O(1)$ bits for
	every such number. Based on this observation $A_i$ uses $(\sum_{x\in
	\mathcal R_i \cap S} O(\log \#(x, S))) + O(\tau)$ bits with $\#(x, S)$ being
	the number of occurrences of $x$ in $S$. In total, $Q_i$ uses $(\sum_{x\in
	\mathcal R_i \cap S} O(\log \#(x, S))) + O(r) = O(d_i) + O(\tau)$ bits with
	$d_i$ being the amount of all numbers in $S$ that are part of $\mathcal
	R_i$.

	We create $Q_i$ only if $S$ contains numbers of a region $\mathcal R_i$.
	Summed over the $k \le N/\tau$ non-empty regions, our structure uses
	$(\sum_{x = 0}^{k} O(\log \#(x, S))) + O(\tau k) = \sum_{x = 0}^{k} O(\log
	\#(x, S)) + O(\bb) = O(\bb)$ bits.
		
	For a structure $Q_i$, the array $A_i$ can be constructed in
	$O(k')$ time, where $k'$ are the number of self-delimiting numbers in
	$\mathcal R_i \cap S$. Let the numbers inside the $i$th region occupy $\bb'$
	bits. The rank-select structure can be initialized in $O(\bb'/\tau)$ time.
	Since our smallest possible self-delimiting number uses $\Omega(1)$ bits and
	the largest possible uses $O(\tau)$ bits, the sum of the numbers of region
	$\mathcal R_i$ is within $\bb'$ and $2^{\bb'}$ and our amount of numbers
	within $\bb'$ bits is within the bounds $\Omega(\bb' / \tau)$ (if most
	self-delimiting numbers are large) and $O(\bb')$ (if most self-delimiting
	numbers are small). In the worst-case, we store $\Theta(\bb'/\tau)$ numbers
	within the $\bb'$ bits, and we end up with a
        construction time that is linear to the amount of numbers in $Q_i$.
	Therefore, in total our structure can be constructed in $O(k + \bb / \tau)$
	time. 
\end{proof}

\section{Preliminaries for Tree Isomorphism}\label{sec:prelim} In this paper we
use basic graph and tree terminology as given in~\cite{CorLRS09}. By
designating a {node} of a tree as the root, the tree becomes a {\em rooted}
tree. If the nodes of a tree have labels, the tree is {\em labeled}, otherwise,
the tree is called {\em unlabeled}. If there is a color
associated with each node of the tree, it is called \textit{node-colored}. {The
\textit{parent} of a node in a rooted tree is the neighbor of~$u$ on the
shortest path from~$u$ to the root. The root has no parent. The
\textit{children} of a node~$u$ are the neighbors of~$u$ that are not its
parent. A \textit{leaf} is a node with no children. Two nodes that share
the same parent are \textit{siblings}. A \textit{descendant} of a node is
either a child of the node or a child of some descendant of the node. By
fixing the order of the children of each node a rooted tree becomes an {\em
ordinal} tree. The \textit{right sibling} (\textit{left sibling}) of a node
$u$ is the sibling of~$u$ that comes after (before)~$u$ in the aforementioned
order, if it exists. } We denote by $\op{deg}(v)$ the degree of a node $v$,
i.e., the number of neighbors of~$v$, and by $\op{desc}(u)$ the number of
descendants of a node $u$ in a tree. The {\em height} of~$u$ is defined as the
number of edges between~$u$ and the longest path to a descendant leaf. The {\em
depth} of~$u$ is defined as the number of edges on the path between~$u$ and the
root.

\begin{definition}{(balanced parenthesis representation of an ordinal tree)}
A balanced parenthesis representation $\op{bpr}$ of an ordinal $n$-node tree~$T$
is a sequence of $2n$ parenthesis and is defined recursively on the root $u$ of
$T$ as 
\begin{align*}
\op{bpr}_u = {(}\ \op{bpr}_{v_1} \ldots \op{bpr}_{v_{\op{deg}(u)}} \ {)}	
\end{align*}
where $v_1$ is the leftmost child of $u$ and the $v_{\op{deg}(u)}$ the rightmost
child of $u$.
\end{definition}
For this representation, Munro and Raman~\cite{MunR97} showed a succinct data
structure {(using $2n + o(n)$ bits)} that allows constant-time {\em tree
navigation}, which we use. A similar data structure was previously shown by
Jacobson~\cite{Jac89} who showed a succinct data structure called \textit{level
order unary degree sequence} (LOUDS) that also provides tree navigation.

\begin{definition}{(tree navigation)}
Given a node $u$ of an ordinal tree $T$, tree navigation allows access to
\begin{itemize}
	\item $\op{parent}(u)$ that returns the parent of $u$, or $\op{null}$ if $u$
	is the root,
	\item $\op{firstChild}(u)$ returns the first child of $u$ if $\op{deg}(u)
	\ne 0$,
	\item $\op{leftSibling/rightSibling}(u)$ returns the left / right sibling of
	$u$, or $\op{null}$ if none exists.
\end{itemize}
\end{definition}
\noindent
Munro and Raman realize tree navigation by supporting the following three
operations on the balanced parenthesis representation.
\begin{lemma}\label{lem:balanced-tree} For any balanced parenthesis
	representation of length~$n$ there is an auxiliary structure on top of the
	representation that consists of~$o(n)$ bits and provides, after $O(n)$
	initialization time, the following operations in constant time: 
	\begin{itemize}
		\item $\op{findclose}(i)$ $(i \in \Nat)$: Returns the position of the
		closing parenthesis matching the open parenthesis at position~$i$.
		\item $\op{findopen}(i)$ $(i \in \Nat)$: Returns the position of the
		open parenthesis matching the closed parenthesis at position~$i$.
		\item $\op{enclose}(i)$ $(i \in \Nat)$: Returns the position of the
		closest open parenthesis part of the parenthesis pair enclosing the open
		parenthesis at position~$i$.
	\end{itemize}
\end{lemma}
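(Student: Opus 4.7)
The plan is to follow the Munro–Raman style two-level block decomposition, combining in-block table lookups with a sparse auxiliary structure for cross-block queries. First I would partition the length-$n$ parenthesis sequence into \emph{small blocks} of $s = \lceil (\log n)/2 \rceil$ bits and group them into \emph{large blocks} of $\Theta(\log^2 n)$ bits. Since there are only $2^{s} = O(\sqrt{n})$ distinct small-block patterns and $O(\log n)$ positions inside each, I would precompute a universal lookup table that, for every pattern and every position, returns the answer to findclose, findopen, and enclose whenever the answer lies inside the same small block. This table occupies $o(n)$ bits and can be filled in $O(n)$ time by running the naive stack algorithm inside each possible pattern once.

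For queries that escape the small block, I would use the \emph{pioneer} idea: a parenthesis is a pioneer if it is the first one in its small block whose match lies in a different large block. A classical counting argument yields at most $O(n/\log n)$ pioneers, so their positions can be marked in a bit vector and their matches listed explicitly in $o(n)$ bits, with rank/select on the (sparse) pioneer set provided by recursing the same construction. To answer $\op{findclose}(i)$, I would first consult the small-block table; on a miss, I would use rank on the pioneer vector to locate the nearest preceding pioneer, jump via its stored match into another small block, and finish with one more table lookup. The call $\op{findopen}$ is handled symmetrically.

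For $\op{enclose}(i)$ I would reuse the same skeleton but on a second structure that tracks, for each parenthesis, the nearest enclosing pioneer; inside one small block the table lookup gives the answer directly, and across a large-block boundary the pioneer pointer combined with one additional in-block lookup completes the query. Building all structures reduces to a single left-to-right scan maintaining a stack whose effective depth is bounded by the number of open large blocks along the current spine, which yields an $O(n)$-time construction.

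The main obstacle, I expect, is bounding the space of the pioneer rank/select structure itself: storing absolute pioneer positions would cost $\Theta((n/\log n)\cdot \log n) = \Theta(n)$ bits and ruin the $o(n)$ goal. I would address this by storing pioneer positions as offsets within their large block using $O(\log\log n)$ bits each, and by recursing the two-level decomposition on the pioneer subsequence, so that the sparser higher levels shrink by a $\log n$ factor at each step; a constant number of recursion levels then suffices to push the overall auxiliary space below $o(n)$ while preserving $O(1)$ query time.
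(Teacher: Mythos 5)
The paper does not prove this lemma at all: it is stated as a known result of Munro and Raman~\cite{MunR97} and used as a black box by the tree-isomorphism algorithm, so there is no proof of it in the paper to compare against. Your sketch essentially reconstructs the standard pioneer-based construction from that literature (Jacobson; Munro--Raman; Geary, Rahman, Raman and Rao): a two-level block decomposition, universal lookup tables answering in-block queries, explicitly stored matches for the sparse set of pioneers, and a bounded-depth recursion (or an extra level of blocks) to keep the auxiliary space in $o(n)$; this is a legitimate route to the lemma. Two details would need care in a full write-up. First, the outerplanarity counting argument gives $O(n/b)$ pioneers for blocks of size $b$, so it is the \emph{large} blocks of size $\Theta(\log^2 n)$ that yield $O(n/\log^2 n)$ pioneers whose matches can be stored explicitly in $o(n)$ bits; defining pioneers with respect to $\Theta(\log n)$-size blocks gives $O(n/\log n)$ of them and leads exactly to the $\Theta(n)$-bit trap you flag at the end, so the fix is not only offset encoding but choosing the right block level for the explicit pointers. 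Second, for $\op{enclose}$ you cannot afford to record ``for each parenthesis the nearest enclosing pioneer'' explicitly (that is $\Theta(n\log n)$ bits); the enclosing information must likewise be kept per block and per pioneer and combined with an in-block table lookup, as in the cited constructions. With those adjustments your outline matches the construction the paper imports from~\cite{MunR97}.
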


The parenthesis representation is typically stored as a bit sequence $B$ with open
parenthesis represented via a $1$ and a closed parenthesis via a $0$. Via
rank-select structure (Lemma~\ref{lem:rs}) on $B$ a bidirectional mapping between each node
$u$ of an ordinal tree $T'$ and the index of $u$'s open parenthesis can be
created.

A parenthesis representation can easily be generalized to allow
not just one type of matching parenthesis, but instead use parenthesis from
a constant sized universe while providing the same operations as described
in Lemma~\ref{lem:balanced-tree} and using the same asymptotic space bound.
The techniques for this are outlined in a book on compact data structures
by Navarro~\cite[Chapter~7]{Navarro16}. Concretely, in a later section we
use a parenthesis representation that uses the alphabet $\{(,), [,]\}$. In
this section we only use the alphabet $\{(,)\}$.

To compute a balanced parenthesis representation of a tree $T$, one requires
to execute a DFS traversal of $T$. As we want our solutions to be
space-efficient, this DFS traversal must be space-efficient as well. One simple
solution is to use the space-efficient algorithm for DFS traversal of Banerjee
et al.~\cite[Lemma~2]{BanerjeeNCSRVSR18}. Their algorithm works with $O(n+m)$ bits
and time, which is $O(n)$ bits on all sparse graph classes. As described, their algorithm requires
the input graph to have so-called \textit{crosspointers}, which is a common
extension of the standard adjacency array representation. We show that one
can execute a simple linear-time and -bit DFS when only a standard adjacency array
representation of the graph is given. The general idea is quite similar to
the algorithm of Banerjee et al.~\cite{BanerjeeNCSRVSR18}. Intuitively,
we execute a standard iterator based DFS, that uses neighborhood iterators.
One can implement such an
iterator space-efficiently using self-delimiting numbers, such that the sum of
bits required is $O(n+m)$ for graphs with $n$ vertices
and $m$ edges. We summarize this
in the following lemma. 

\begin{lemma}\label{lem:simpledfs}
	Let $G$ be an $n$-vertex $m$-edge graph given via an
	adjacency array representation. There is a DFS
	that runs on $G$ in $O(n+m)$ time and bits.
\end{lemma}
\begin{proof}
    As described in Section~\ref{sec:preone}, let $\op{adj}[u]$ be the adjacency
	array of a vertex $u \in V(G)$. We begin by describing a
	small auxiliary data structure, which we call \textit{compact iterator}. It is
	initialized for a vertex $u \in V(G)$ and allows iteration over all
	neighbors of $u$ in $G$ in addition to outputting the parent $v$ of $u$ in
	the DFS tree (or $\textrm{null}$ if the vertex has no parent). The parent
	$v$ of $u$ is passed as an optional parameter during initialization. The
	exact operations that the compact iterator provides
	for a vertex $u$ are as follows:
    \begin{itemize}
		\item $\op{initialize}(u, v)$: Given a vertex $u$ and an
			 (optional) parent $v$ of $u$ in the DFS tree, initialize the
			 compact iterator by setting an iterator variable $i$ to $0$,
			 followed by iterating over $\op{adj}[u]$ to find and store the
			 index $j$ so that $v=\op{adj}[u][j]$.
        \item $\op{hasNext}()$: Output $\op{true}$ exactly if the
            iteration over $\op{adj}[u]$ is finished, i.e., $i=\op{deg}(u)$.
        \item $\op{next}()$: Output the next element $\op{adj}[u][i]$
            followed by incrementing $i$, if $\op{hasNext}()$ is
            $\op{true}$. Otherwise output $\textrm{null}$.
        \item $\op{parent}()$: Output the parent vertex $v$ of $u$ in
            the DFS tree as $\op{adj}[u][j]$, or $\op{null}$ if $u$ has no
            parent. 
    \end{itemize}

	As described, each compact iterator of a vertex $u$ consists of two
	variables $i$ and $j$, where $i$ is the current index of the iterator and
	$j$ is chosen such that $\op{adj}[u][j]$ is the parent of $u$ in the DFS
	tree. The value of each of these two variables is at most $\op{deg}(u)$ and
	so can be stored with $2 \lceil \log
	\op{deg}(u) \rceil + 1=O(\log \op{deg}(u))$ bits as a self-delimiting
	number (Section~\ref{sec:intro}). Note that the runtime for the $\op{initialize}$
	operation is $O(\op{deg}(u))$. Storing the compact iterators for all
	vertices can be done using $m$ bits in total, since $\sum_{u \in
	V(G)}O(\log \op{deg}(u)) = O(m)$. To allow constant time access to a compact
	iterator we store them contiguously for each vertex $u \in (1, 2, \ldots,
	n)$ and use static space allocation (Section~\ref{sec:preone}) to address
	them. We denote by $I$ the structure that stores all these compact iterators
	and with $I[u]$ the compact iterator for a vertex~$u$.
	
	In the following we describe our DFS algorithm. We assume that the
	structure $I$ is constructed in addition to an $n$-bit array $B$, which we
	use to mark vertices as visited/unvisited. Initially, all compact iterators
	in $I$ are not initialized, and all bits in $B$ are set to $0$, indicating
	that all vertices are unvisited.
	To execute the DFS on an arbitrary (given) start vertex $u$ we call
	$\op{explore}(u)$, which explores a connected component.
	To explore all connected components we execute $\op{exploreAll}()$.

	\begin{description}
		\item[$\op{explore}(u)$]: 
		Mark $u$ as visited ($B[u]:=1$) and initialize 
		the compact iterator of $u$ with $I[u].\op{initialize}(u, \op{null})$ 
		(note that $u$ has no parent as it becomes the root of
		the DFS tree for a connected component).
	
		While $I[u].\op{hasNext}() = \op{true}$: Obtain the vertex $v :=
		I[u].\op{next}()$. If $v$ was visited ($B[v] = 1$), continue the loop.
		Otherwise, we initialize the compact iterator of $v$ with
		$I[v].\op{initialize}(v, u)$, mark $v$ as visited ($B[v]:=1$), set
		$u:=v$ and continue the loop.
	
		If $I[u].\op{hasNext}()=\op{false}$, we have to backtrack. We get the
		parent of $u$ via $p := I[u].\op{parent}()$. If $p \neq \textrm{null}$
		we set $u:=p$ and enter the loop of the previous paragraph. If $p =
		\textrm{null}$ we have reached the root and are finished.
	 
	 	\item[$\op{exploreAll}()$]: For $u = 1$ to $n$, if $B[u] = 0$, call $\op{explore}(u)$.
	\end{description}
 
    The described algorithm executes a DFS in time $O(n+m)$ and uses $O(n + m)$ bits 
    to store $I$ and $B$. 
    Each compact iterator is initialized
	only once, which takes a total of $O(n + m)$ time. The space used by other
	variables is negligible.
\end{proof}

As mentioned, to construct the balanced parenthesis structure that represents a tree
one simply requires a DFS. This is summarized in the following lemma. In all subsequent
results that require a tree (forest) as an input we require either the input tree (forest)
to be given via adjacency arrays (Section~\ref{sec:preone}), via a balanced parenthesis
representation or given implicitly via access to a DFS that outputs the nodes of the tree.

\begin{lemma}\label{lem:constructtree} Given a DFS that outputs the nodes of an
$n$-node tree $T$ in constant time per node there is an algorithm that computes
a data structure representing an ordinal tree $T'$ in $O(n)$ time using $O(n)$
bits such that $T$ and $T'$ are isomorphic. The data structure allows tree
navigation on $T'$ in constant time.
\end{lemma}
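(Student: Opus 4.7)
The plan is to construct the balanced parenthesis representation of $T'$ directly by performing a depth-first traversal of $T$, and then to stack the existing auxiliary structures on top of the resulting bit sequence. Concretely, I would allocate an output bit sequence $P$ of $2n$ bits, and run a DFS from the root of $T$; whenever the search first visits a node, I append a \texttt{1} (an open parenthesis) to $P$, and when the search backtracks out of that node, I append a \texttt{0} (a closing parenthesis). The order in which children are visited is fixed by the order in which the underlying graph interface enumerates neighbors, and this choice is precisely what turns the unordered rooted tree $T$ into the ordinal tree $T'$; isomorphism between $T$ and $T'$ is then immediate from the construction.

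Once $P$ is built, I would install the auxiliary $o(n)$-bit structure of Lemma~\ref{lem:balanced-tree} for $\op{findclose}$, $\op{findopen}$, and $\op{enclose}$, plus a rank-select index on $P$. The navigation primitives follow by standard translations: $\op{parent}(u)=\op{enclose}(u)$, $\op{firstChild}(u)$ is $u+1$ provided $P[u+1]=1$, $\op{rightSibling}(u)=\op{findclose}(u)+1$ when that position holds a \texttt{1}, and $\op{leftSibling}(u)=\op{findopen}(u-1)$ when $P[u-1]=0$. Rank-select on $P$ gives the bijection between a node (identified by the index of its open parenthesis) and its preorder rank, so both identifiers are interchangeable in constant time.

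The only nontrivial obstacle is running DFS in $O(n)$ bits rather than the textbook $\Theta(n\log n)$ bits that an explicit recursion stack would consume, while still producing the parentheses in exactly the order dictated by the traversal. I would invoke the linear-time $O(n)$-bit DFS from the space-efficient graph-algorithms line of work cited in the introduction (e.g., Elmasry, Hagerup, Kammer) and arrange that at each \emph{first visit} and \emph{finish} event reported by that traversal, the appropriate parenthesis is appended to $P$. Appending to $P$, constructing the $o(n)$-bit navigation index, and constructing the rank-select structure all run in $O(n)$ time and use $O(n)$ bits, so combined with the space-efficient DFS the overall bounds match the claimed $O(n)$ time and $O(n)$ bits.
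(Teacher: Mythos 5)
Your construction matches the paper's approach: the paper states this lemma as a combination of the techniques just presented, with Fig.~\ref{fig:ordinal-tree} sketching exactly your method --- a single preorder DFS emitting an open/closed parenthesis at each first visit/backtrack to build the $2n$-bit balanced parenthesis string, on top of which the $o(n)$-bit structure of Lemma~\ref{lem:balanced-tree} and a rank-select index give constant-time $\op{parent}$, $\op{firstChild}$ and sibling navigation via $\op{enclose}$, $\op{findclose}$ and $\op{findopen}$, with the node/preorder-index bijection from rank-select. Your additional care about performing the DFS itself with $O(n)$ bits (via the space-efficient DFS literature, or simply an Euler-tour walk of the tree) is consistent with, and slightly more explicit than, what the paper assumes.
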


When working with node-colored trees $T$ one can extend the data
structure in a simple fashion to also maintain the colors on the tree $T'$
constructed via Lemma~\ref{lem:constructtree}.
We assume for the $n$-node tree $T$ that there exists a function $\rm{color}: V(T) \rightarrow \{0, 1\}^{+}$
that outputs for each node $u \in V(T)$ a binary string $\rm{color}(u)$ that represents
the color of $u$. After the construction of $T'$ we want to maintain the color
information, but as $T'$ is only isomorphic to $T$ this is not directly possible.
Roughly speaking, $T'$ is a re-labeling of $T$, and we thus need to account for this.
Note that $T'$ is labeled using a DFS on $T$, meaning each node of $T'$ is labeled
in pre-order. During the traversal, when visiting a node $u$ of $T$ for the first time,
we simply write the binary string $\rm{color}(u)$ into a bit vector $C$,
and in a secondary bit vector $R$ mark the first position in $C$ of the newly written
binary string $\rm{color}(u)$ into $C$. Once $T'$ is constructed in this fashion,
$C$ is a concatenation of all binary strings $\rm{color}(u)$ for $u \in V(T)$ sorted via pre-order traversal.
Using a rank-select structure on $R$ then allows to evaluate the colors
for $T'$, i.e., we store the colors in $C$ via static space allocation.
We summarize this in the following corollary.

\begin{corollary}\label{cor:constructcoloredtree}
	Given a rooted $n$-node colored tree $T$ and a color function $\rm{color}: V(T) \rightarrow \{0, 1\}^{+}$
	that can be evaluated in time $O(|\rm{color}(u)|)$ for each $u \in V(T)$
	there is an algorithm that computes a data
	structure representing an ordinal colored tree $T'$ in $O(n+b)$ time using $O(n+b)$ bits
	such that $T$ and $T'$ are isomorphic, with $b=\sum_{u \in V(T)} |\rm{color}(u)|$.
	The data structure allows tree navigation
	on $T'$ in constant time in addition to access to the $\rm{color}$ function for each $u' \in V(T')$.
\end{corollary}

\begin{figure}[h]
\centering
	\begin{subfigure}{.48\textwidth}
		\centering
		\begin{tikzpicture}[scale=0.8,every tree node/.style={minimum width=1.28em, minimum height=1.28em, draw, circle, inner sep=0pt, text width=10pt, align=center, fill=white, font=\footnotesize},
level distance=2em,sibling distance=0.374em, 
edge from parent path={(\tikzparentnode) -- (\tikzchildnode)}]
\Tree [.\node (a) {9}; 
[.\node[] (a) {11};
	[.\node[] (a) {10}; ] [.\node[] (a) {3}; ] [.\node[] (a) {13}; ] [.\node[] (a) {5}; ] 
] [.\node[] (a) {12}; ] [.\node[] (a) {4}; ] [.\node[] (a) {7}; 
	[.\node[] (a) {14}; ] [.\node[] (a) {2}; ] [.\node[] (a) {6}; ] [.\node[] (a) {8}; ] 
]  [.\node[] (a) {1}; ]
]
\end{tikzpicture}%
		\caption{A rooted tree~$T$.}	
	\end{subfigure}
	\begin{subfigure}{.48\textwidth}
		\centering
		\begin{tikzpicture}[scale=0.8,every tree node/.style={minimum width=1.28em, minimum height=1.28em, draw, circle, inner sep=0pt, text width=10pt, align=center, fill=white, font=\footnotesize},
level distance=2em,sibling distance=0.374em, 
edge from parent path={(\tikzparentnode) -- (\tikzchildnode)}]
\Tree [.\node (a) {1}; 
[.\node[] (a) {2};
	[.\node[] (a) {3}; ] [.\node[] (a) {4}; ] [.\node[] (a) {5}; ] [.\node[] (a) {6}; ] 
] [.\node[] (a) {7}; ] [.\node[] (a) {8}; ] [.\node[] (a) {9}; 
	[.\node[] (a) {10}; ] [.\node[] (a) {11}; ] [.\node[] (a) {12}; ] [.\node[] (a) {13}; ] 
]  [.\node[] (a) {14}; ]
]
\end{tikzpicture}
		\caption{An ordinal tree $T'$ isomorphic to~$T$.}
	\end{subfigure}
	\vspace{10pt}
	\begin{tikzpicture}[
every node/.style={minimum width=1.18em, minimum height=1.18em, inner sep=0pt, text width=8pt, align=center, fill=white, font=\footnotesize},
start chain = going right,
node distance = 0pt]
\field{1}{\bfseries  (}{1}
\field{2}{\bfseries  (}{2}
\field{3}{\bfseries  (}{3} 
\field{4}{)}{} 
\field{5}{\bfseries  (}{4}
\field{6}{)}{} 
\field{7}{\bfseries  (}{5}
\field{8}{)}{}
\field{9}{\bfseries  (}{6} 
\field{10}{)}{}
\field{11}{)}{}
\field{12}{\bfseries  (}{7} 
\field{13}{)}{}
\field{14}{\bfseries  (}{8} 
\field{15}{)}{}
\field{16}{\bfseries  (}{9}
\field{17}{\bfseries  (}{10} 
\field{18}{)}{} 
\field{19}{\bfseries  (}{11}
\field{20}{)}{} 
\field{21}{\bfseries  (}{12}
\field{22}{)}{}
\field{23}{\bfseries  (}{13} 
\field{24}{)}{}
\field{25}{)}{}
\field{26}{\bfseries  (}{14} 
\field{27}{)}{}
\field{28}{)}{} 

\connectBelow[0.5]{1}{28}

\connectBelow[0.4]{2}{11}
\connectBelow[0.4]{16}{25}

\connectBelow[0.3]{3}{4}
\connectBelow[0.3]{5}{6}
\connectBelow[0.3]{7}{8}
\connectBelow[0.3]{9}{10}

\connectBelow[0.3]{12}{13}
\connectBelow[0.3]{14}{15}

\connectBelow[0.3]{17}{18}
\connectBelow[0.3]{19}{20}
\connectBelow[0.3]{21}{22}
\connectBelow[0.3]{23}{24}

\connectBelow[0.3]{26}{27}
\end{tikzpicture}
	\caption{Given a rooted tree $T$ (shown in (a)) with the root $u = 9$ a
	balanced parenthesis representation $\op{bpr}_u$ (below of (a) and of (b))
	of an ordinal tree $T'$ that is isomorphic to $T$ (shown in (b)) can be
	computed by a single DFS run on $T$ in preorder. The parenthesis pairs of
	each node of $T'$ are connected by a line and the index of each open
	parenthesis represents the label of that node in $T'$.}
	\label{fig:ordinal-tree}
\end{figure}

\section{Tree Isomorphism}\label{sec:treeiso}%
We start this section by giving an introduction to graph and tree isomorphism.

\begin{definition}{(graph isomorphism)}\label{def:graph-iso} Two given graphs $G
	= (V_G, E_G)$ and $H = (V_H, E_H)$ are isomorphic exactly if there exists a
	bijective mapping $f: V_G \rightarrow V_H$ such that vertices $u, v \in V_G$
	are adjacent in $G$ if and only if $f(u), f(v) \in V_H$ are adjacent in $H$.
	Then~$f$ is called an {\em isomorphic mapping}.
\end{definition}

Buss~\cite{Buss97} shows an equivalent definition for rooted trees.
\begin{definition}{(rooted tree isomorphism)}\label{def:treeiso} By induction
	two rooted trees $T$ and $T'$ are isomorphic if and only if
	\begin{enumerate}[label=(\alph*)]
		\item $T$ and $T'$ consist of only one node, or
		\item the roots $r$ and $r'$ of $T$ and $T'$, respectively, have the
		same number $m$ of children, and there is some ordering $T_1, \ldots,
		T_m$ of the maximal subtrees below the children of $r$ and some ordering
		$T'_1, \ldots, T'_m$ of the maximal subtrees below the children of $r'$
		such that $T_i$ and $T'_i$ are isomorphic for all $1 \le i \le m$.
	\end{enumerate}
\end{definition}

We start to describe a folklore algorithm for tree isomorphism that
requires $\Theta(n \log n)$ bits on $n$-node trees. Let $T_1$ and $T_2$ be two
rooted trees. The algorithm processes the nodes of each tree in rounds. In each
round, all nodes of depth $d = \text{max}, \ldots, 0$ are processed. Within a
round, the goal is to compute a {\em classification number} for every node~$u$
of depth~$d$, i.e., a number out of $\{0, \ldots, n\}$ that represents the
structure of the maximal subtree below $u$. The correctness of the algorithm is
shown in~\cite{AhoHU74} and follows from the invariant that two subtrees in the
trees $T_1$ and $T_2$ get the same classification number exactly if they are
isomorphic.

Since we later want to modify the algorithm, we now describe it in greater
detail. In an initial process assign the classification number~$0$ to every leaf
in each tree. {Then, starting with the maximal depth do the following: start by
computing (for each tree) the {\em classification vector} of each non-leaf $v$
of depth $d$ consisting of the classification numbers of $v$'s children, sorted
lexicographically. After doing this in each tree, compute the classification
number for the non-leafs as follows: for each tree $T_1, T_2$ put the
classification vectors of depth $d$ into a single sequence $\mathcal{S}_1$ and
$\mathcal{S}_2$, respectively. Sort each of these sequences lexicographically by
interpreting each classification vector in the sequence as a number. Then
assign classification numbers $1, 2, 3,$ etc.\ to the vectors in the (now
sorted) sequences $\mathcal{S}_1, \mathcal{S}_2$ such that two vectors get the
same number exactly if they are equal (among both sequences).} By induction the
invariant holds for all new classification numbers. Repeat the whole procedure
iteratively for the remaining depths until reaching the root. By the invariant
above, both trees are isomorphic exactly if the roots of both trees have the
same classification number.

\begin{figure}[h!]
	\centering
	\begin{subfigure}{0.5\textwidth}
		\centering
		\begin{tikzpicture}[scale=0.9,every tree node/.style={minimum width=1.28em, minimum height=1.28em, draw, circle, inner sep=0pt, text width=8pt, align=center, fill=white, font=\footnotesize},
level distance=2em,sibling distance=0.374em, 
edge from parent path={(\tikzparentnode) -- (\tikzchildnode)}]
\Tree [.\node (n1) {1}; 
	[.\node (3) {1};
		[.\node (n2) {0};]
		[.\node (2) {1};
			[.\node (1) {0};]
		]
	][.\node (n3) {2};
		[.\node (n4) {0};]
		[.\node (n5) {0};]
		[.\node (n6) {0};]
	]
	[.\node (n7) {0};]
]

\node[right=6.5em of 1] (v1) {\small $(({0}))$};
\node[right=6.5em of 2] (v2) {\small$(({0},{1})({0},{0},{0}))$};
\node[right=7.2em of 3] (v3) {\small$(({0},{1},{2}))$};

\node[above=-0.6em of v1] () {1};
\node[above=-0.6em of v2, shift={(-0.6,0)}] () {1};
\node[above=-0.6em of v2, shift={(0.42,0)}] () {2};
\node[above=-0.6em of v3] () {1};

\end{tikzpicture}
	\end{subfigure}%
	\begin{subfigure}{0.5\textwidth}
		\centering
		\begin{tikzpicture}[scale=0.9,every tree node/.style={minimum width=1.28em, minimum height=1.28em, draw, circle, inner sep=0pt, text width=8pt, align=center, fill=white, font=\footnotesize},
level distance=2em,sibling distance=0.374em, 
edge from parent path={(\tikzparentnode) -- (\tikzchildnode)}]
\Tree [.\node (n1) {1};
	[.\node (n2) {2};
		[.\node (n3) {0};][.\node (n4) {0};][.\node (n5) {0};]
	]
	[.\node (n6) {0};]
	[.\node (3) {1};
		[.\node (2) {1};
			[.\node (1) {0};]
		]
		[.\node (n7) {0};]
	]
]

\node[right=2em of 1] (v1) {\small $(({0}))$};
\node[right=2em of 2] (v2) {\small$(({0},{1})({0},{0},{0}))$};
\node[right=1.3em of 3] (v3) {\small$(({0},{1},{2}))$};

\node[above=-0.6em of v1] () {1};
\node[above=-0.6em of v2, shift={(-0.6,0)}] () {1};
\node[above=-0.6em of v2, shift={(0.42,0)}] () {2};
\node[above=-0.6em of v3] () {1};
\end{tikzpicture}
	\end{subfigure}
	\caption{An example of the algorithm with                  
                  isomorphic trees $T_1$ (left) and $T_2$ (right). The numbers
	inside the nodes and above the vectors are classification numbers. The
	vectors within the parenthesis are the sorted classification vectors of
	maximal subtrees.}\label{fig:folklore-tree-iso}
\end{figure}

The algorithm above traverses the nodes in order of their depth, starting from
the largest and moving to the smallest, until it reaches the root. One key
modification we make to achieve our goal of making the aforementioned algorithm
space efficient, is that we traverse the nodes in order of their height starting
from height $0$ (first round) until reaching the root with the largest height
(last round), i.e., in \textit{increasing height}. As mentioned, the standard
algorithm requires the nodes of the tree to be output in {\em shrinking depth}.
While there is a succinct data structure due to He et
al.~\cite{he_et_al:LIPIcs:2020:13369} that provides us with all necessary
operations to implement such a shrinking depth tree traversal, the construction
step of this data structure requires $O(n \log n)$ bits due to (among other
things) the usage of an algorithm of Farzan and
Munro~\cite[Theorem~1]{DBLP:journals/algorithmica/FarzanM14} that partitions the
input tree into smaller subtrees. The aforementioned algorithm uses a stack of
size $O(n)$, with each element on the stack using $\Theta(\log n)$ bits. In
addition, the decomposition itself is stored temporarily. As we aim for a space
usage of $O(n)$ bits we use a different approach. Note that the tree data structure
outlined in Section~\ref{sec:prelim} only allows to output the nodes in order of
their \textit{increasing depth}. While this implies a simple algorithm for
shrinking depth, by storing the result of the traversal and then outputting it
in reverse order, such a solution requires $\Theta(n \log n)$ bits. We therefore
modify the standard isomorphism algorithm such that traversal by
\textit{increasing height} suffices, which can be implemented using the tree
navigation provided by the data structure outlined in Section~\ref{sec:prelim}.

The difference to the standard approach is that we compute classification vectors
consisting of classification numbers, which were computed in different rounds.
To avoid a non-injective mapping of the subtrees, our classification numbers
consist of tuples $(h_u, q_u)$ for each node $u$ where $h_u$ is the height of
$u$ and $q_u$ is a number representing the subtree induced by $u$ and its
descendants. Intuitively, $q_u$ is the classification number from the folklore
algorithm above.

The same invariant as for the standard algorithm easily shows the correctness of
our modified algorithm whose space consumption 
is determined by
\begin{enumerate}[label=(\Alph*)]
	\item the space for traversing nodes in order of their height,
	\item the space for storing the classification vectors and the
              classification numbers, and
	\item the space needed by an algorithm to assign new classification numbers
	based on the previous computed classification vectors.
\end{enumerate}
We now describe $O(n)$-bit solutions for (A)-(C).

{\bfseries (A) Iterator returning vertices in increasing height.} The idea of
the first iteration round is to determine all nodes of height $h = 0$ (i.e., all
leaves) of the given tree in linear time and to store them in an $O(n)$-bit
choice dictionary~$C$ (Lemma~\ref{lem:cd}). While iterating over the nodes of height $h$ in $C$, our
goal is to determine the nodes of height $h + 1$ and store them in a choice
dictionary $C'$. The details are described in the next paragraph. If the
iteration of the nodes in $C$ is finished, swap the meaning of $C$ and $C'$ and
repeat this process iteratively with $h + 1$ as the new height $h$ until,
finally, the root is reached.

A node is selected for $C'$ at the moment when we have processed all of its
children. To compute this information, our idea is to give every unprocessed
node $u$ a token that is initially positioned at its leftmost child. Intuitively, the goal is
to pass that token over every child of $u$ from left to right until reaching the
rightmost child of $u$, at which point we mark $u$ as processed and store it in
$C'$. More precisely, we iterate over the
children in the deterministic order given via the iterate operation provided 
by the choice dictionary.
Initially, no node is marked as processed. Informally speaking, we run a
relay race where the children of a node are the runners. Before runners can
start their run, they must be marked as processed. The initiative to pass the
token is driven by the children of $u$. Whenever a child $v$ of $u$ is
processed, we check if either $v$ is the leftmost child of $u$ or $v$'s left
sibling has $u$'s token. If so, we move the token to the right sibling $v'$ of
$v$ and then to the right sibling of $v'$, etc., as long as the sibling is
already marked. If all children of~$u$ are processed, $u$ becomes marked and
part of $C'$.

Using Lemma~\ref{lem:balanced-tree} we can jump from a node to its previous and
next sibling in constant time. Thus, all moves of $u$'s token can be processed
in $O(\op{deg}(u))$ time until we insert $u$ in $C'$. In total, moving all
tokens can be done in $O(n)$ time. Based on the ideas above, we get the
following lemma.

\begin{lemma}\label{lem:iterHeigth} Given an unlabeled rooted $n$-node tree $T$
	there is an iteration over all nodes of the tree in order of their height
	that runs in linear time and uses $O(n)$ bits. The iteration is realized by
	the following methods:
\begin{itemize}
	\item \op{init}$(T)$: Initializes the iterator and sets height $h = 0$.
	\item \op{hasNext}: Returns \textsc{true} exactly if nodes of height $h$
	exist.
	\item \op{next}: Returns a choice dictionary containing nodes of height~$h$
	and increments $h$ by 1.
\end{itemize}
\end{lemma}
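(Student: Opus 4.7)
I would implement the iterator on top of the balanced-parenthesis representation of $T$, which by the earlier lemmas supports \op{parent}, \op{firstChild}, \op{leftSibling}, \op{rightSibling} in constant time and uses only $O(n)$ bits. Alongside this I keep two choice dictionaries $C$ and $C'$ over $[0,n)$, a one-bit array \idit{done} marking already-processed nodes, and a one-bit array \idit{hasTok} indicating, for every non-root $v$, whether $v$ currently holds its parent's token. All of these fit in $O(n)$ bits.

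For \op{init}$(T)$ I perform a single linear scan that detects every leaf (an open parenthesis followed immediately by its matching close) and inserts it into $C$; for every leaf $v$ that is the leftmost child of its parent $u$, I set $\idit{hasTok}[v]=1$, so that each internal node's token initially sits on its leftmost child. I set $h=0$. Then \op{hasNext} returns \textbf{true} iff $C$ is non-empty, which the choice dictionary answers in $O(1)$. The method \op{next} returns $C$ and arranges that, as the caller iterates over it, $C'$ is populated with the height-$(h+1)$ nodes; once the caller has finished with $C$ I swap the roles of $C$ and $C'$ and increment $h$.

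The update triggered by the caller consuming a node $v\in C$ is the relay-race sketched in the excerpt. I set $\idit{done}[v]=1$; if $v$ is the root I stop. Otherwise let $u=\op{parent}(v)$, and I trigger a token move only when $v$ itself holds the token, i.e.\ when $v$ is the leftmost child of $u$ (no left sibling) or $\idit{hasTok}[\op{leftSibling}(v)]=1$; in the latter case I first clear that bit. When triggered, I walk rightward through $v$'s siblings while they are \idit{done}, stopping either at an undone sibling $v'$ (where I set $\idit{hasTok}[v']=1$) or at the right end of $u$'s child list, in which case every child of $u$ has been processed and I add $u$ to $C'$.

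The main obstacle is the amortized time bound; correctness is then a routine induction. A naive implementation that asks ``are all of $u$'s children done?'' each time one finishes could cost $\Theta(\op{deg}(u))$ per child and thus $\Theta(n^2)$ overall. The token invariant -- $u$'s token always sits at its leftmost not-yet-\idit{done} child and advances only when that particular child is consumed -- ensures that the combined rightward walks performed on behalf of $u$ traverse each child of $u$ at most twice, so the work attributable to $u$ is $O(\op{deg}(u))$, and summing over the tree yields $O(n)$. Correctness then follows by induction on $h$: node $u$ enters $C'$ exactly when its last child is marked \idit{done}, so if every child $c$ was emitted in the round equal to its height, $u$ is emitted in round $1+\max_c \mathrm{height}(c)=\mathrm{height}(u)$.
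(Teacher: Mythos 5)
Your overall plan is the same as the paper's (balanced-parenthesis navigation, two choice dictionaries, a per-parent token passed along the children in a relay race), but the token protocol as you specify it is inconsistent and breaks correctness. You mix two conventions: your walk (and your stated invariant) places the token on the \emph{leftmost not-yet-done} child, and you even initialize tokens on undone leaves, yet your trigger reads the token from the \emph{left sibling} of the consumed node ("$v$ is leftmost or $\idit{hasTok}[\op{leftSibling}(v)]=1$") without ever checking that this left sibling is already \idit{done}. Consequently a token can sit on an unprocessed node and be consumed by a right sibling that finishes earlier. Concrete counterexample: let $u$ have children, in order, $c_1$ (leaf), $c_2$ (leaf), $c_3$ (subtree of height $5$), $c_4$ (subtree of height $2$). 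Your \op{init} sets $\idit{hasTok}[c_1]=1$. In round $0$ the choice dictionary may return $c_2$ before $c_1$; consuming $c_2$ triggers (since $\idit{hasTok}[c_1]=1$, although $c_1$ is not done), and the walk stops at the undone $c_3$, setting $\idit{hasTok}[c_3]=1$. In round $2$, consuming $c_4$ triggers via $\idit{hasTok}[c_3]=1$ (again, $c_3$ is not done), the walk reaches the right end of the child list, and $u$ is inserted into $C'$ -- even though $c_3$ will not be processed until round $5$. So $u$ is reported at height $3$ instead of its true height $6$ (and is never reported correctly later, since when $c_3$ is finally consumed its left sibling carries no token). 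The same failure occurs even without your leaf-initialization of tokens, because the walk itself deposits tokens on undone siblings.

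The paper avoids this by keeping the two conventions consistent: its token bit $P$ is only ever set on \emph{processed} nodes (the walk advances over done siblings and marks the last done one), and its trigger requires both $D[v_(]=1$ \emph{and} $P[v_(]=1$ for the left sibling, so a parent enters $C'$ only when its rightmost child closes a fully processed prefix. Your proof can be repaired in either of two ways: (i) adopt the paper's convention -- drop the leaf initialization, have the walk leave the token on the last \idit{done} sibling, and add the check $\idit{done}[\op{leftSibling}(v)]=1$ to the trigger; or (ii) consistently use "token on the leftmost undone child": initialize $\idit{hasTok}$ at the leftmost child of every internal node, trigger only when $\idit{hasTok}[v]=1$ for the consumed node $v$ itself, and advance the token past done siblings from there. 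With either fix your amortized $O(\op{deg}(u))$ charging argument and the induction on $h$ go through as you describe, matching the paper's bounds.
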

\begin{proof}
Let $T=(V, E)$ be the tree under consideration. We first describe the
initialization. Compute a balanced parenthesis representation $R$ of $T$ such
that tree navigation is supported~(Lemma~\ref{lem:balanced-tree}). Let $u_($ and
$u_)$ be the positions of the open and closed parenthesis, respectively, for
each node $u$ in $R$ and, let $n' = 0$ be the number of nodes already returned
by the iterator. Initialize $C$ and $C'$ as empty choice dictionaries of $O(n)$
bits. Take $D$ as a bit-vector of $2n$ bits where we mark a node $u$ as
processed by setting $D[u_(] = 1$ and take $P$ as a bit-vector of $2n$ bits
where we store all tokens for the nodes. Intuitively speaking, $P$ is used as a
raw area of memory (initially containing only $0$-bits) which we use to store
the tokens. In detail the token for node $u$ is stored in $P[v^1_(], \ldots,
P[v^{\ell}_(]$ where $v^1, \ldots, v^{\ell}$ are the children of $u$. In the
subsequent steps we adhere to the following two invariants before and after each
round:
\begin{enumerate}
	\item For a child $u$ of a node $v$, $P[u_(] = 1$ holds exactly if, for all
siblings $w$ left to $u$, $D[w_(] = 1$ holds and for the next right sibling $y$
of $u$, $D[y_(] = 0$ holds.
	\item If $D[u_(] = 1$ for all children $u$ of a node $v$, then $D[v_(] = 1$.
\end{enumerate}
For any call of \op{hasNext}, return $(n' < n)$. {If $\op{next}$ is called with
$n' > n$, return a \op{null} pointer.} For a call of \op{next} with $n' = 0$, do the following:
traverse $R$ to find all leaves, i.e., for each node $u$ with $u_) = u_( + 1$,
store $u$ into $C$, set $D[u_(] = 1$, and set $n'$ to the number of leaves.
Afterwards return a pointer to $C$.

For a call of \op{next} with $n' > 0$ {(and $n' \leq n$)}, collect the nodes of
the next height as follows:
iterate over the nodes $u$ in $C$ and proceed to check if $u$ is a leftmost
 child or check for the left sibling $v$ of $u$ if $D[v_(] = 1 \land P[v_(] = 1$
 holds. If not, continue the iteration over $C$. Otherwise, start or continue
 the {``relay race'': iteratively move to the right sibling(s) $v$ of $u$ as
 long as $D[v_(] = 1$ holds. Then, set $P[v_(] = 1$. Now, if $v$ is the
 rightmost sibling (i.e., $v$ has no right sibling), determine the parent $x$ of
 $u$, set $D[x_(] = 1$, add $x$ to $C'$ and increment $n'$ by one. After
 executing this procedure for all nodes in $C$,} set $C := C'$, initialize a new
 empty choice dictionary~$C'$ and return a pointer to $C$.

Note that all our structures use $O(n)$ bits and can easily
be initialized in $O(n)$ time. {Each operation regarding the
balanced parenthesis representation $R$ takes constant time each (e.g., tree
navigation, finding matching opening/closing parenthesis, \ldots).} Moreover, we
pass the token only over siblings that are already marked in $D$ and do nothing
if we cannot pass a token to the next sibling. In total, we require time linear
to the amount of siblings. To conclude, the iteration over all nodes of the
given tree requires linear time. 
\end{proof}

{\bfseries (B) Storing the classification numbers.} We now describe an
algorithm to store our classification numbers in an $O(n)$-bit storage schema.
Recall that a classification vector of a node $u$ consists of the classification
numbers of its children. Our idea is to use self-delimiting numbers to store the
classification numbers and to choose the classification numbers such that their
size is bounded by \op{max}($c_1 \cdot \log n$, $c_2 \cdot \op{desc}(u)$) for some
constants $c_1, c_2$.
We take $(0, 0)$ as a classification number for every leaf so that a constant
number of bits suffices to represent it. Observe that, after computing the
classification number of a node~$u$ the  
classification vector of $u$ (i.e., the classification numbers of $u$'s
children) is not required anymore and thus, 
the classification vector 
(and thus, the classification numbers of $u$'s children) can be overwritten.
We use this
space to store the classification number of $u$.
By the next lemma we can store the classification numbers and vectors.
(The function $\rm{color}$, required by the lemma, will be used later 
to adjust the space needed to also encode colors for node-colored trees.
For uncolored trees $\rm{color}(\cdot)$ returns an empty bit string for all inputs.)

\begin{lemma}\label{lem:dataSchema} 
     Let $c > 0$ be 
     a constant integer and $T$ 
     a node-colored rooted $n$-node tree with $\rm{color}: V(T) \rightarrow \Nat$
      the function to evaluate the color of each node $u \in V(T)$ such that
	  $\rm{color}(u)$ can be evaluated in time $O(|\rm{color}(u)|)$. Then there is an
	$O(n + \sum_{u \in V(T)}|\rm{color}(u)|)$-bits data structure that
	initializes in $O(n + \sum_{u \in V(T)}|\rm{color}(u)|)$ time and, for each node $u$ of $T$, provides
	operations \op{read}($u$) and \op{write}($u$) in constant time as
	well as
	\op{vector}($u$) in $O(\rm{deg}(u))$ time.
	\begin{itemize}
		\item \op{read}($u$) ($u$ node of $T$): If a number $x$ is stored for
                    $u$, then $x$ is returned. Otherwise, { the result is
                    undefined}.
		\item \op{write}($u$, $x$) ($u$ node of $T$, $0 \le x \le
            \min\{2^{2c\cdot \op{desc(u)} + |\rm{color}(u)|},\mathrm{poly}(w)\}$): Store number $x$ for
            node $u$ and delete all stored numbers of the descendants of~$u$.
		\item \op{vector}($u$) ($u$ node of $T$): Returns the bit-vector of
		length $\le 2c\cdot \op{desc}(u) + \sum_{v \in \rm{children}(u)} |\rm{color}(u)|$ consisting of the concatenation of the
		self-delimiting numbers stored for the children of $u$.
	\end{itemize}
\end{lemma}
\begin{proof}
	We assume that $T$ is given via the data structure of
	Corollary~\ref{cor:constructcoloredtree}, otherwise
	construct it.
	We first describe some auxiliary data structures that we require.
	The first is a bit array $B$ of $6c\cdot n + \sum_{u \in V(T)}|\rm{color}(u)|$ bits 
	that we use to maintain the numbers, i.e., the values we read/write
	with the specified operations.
	We use a second bit array $P$ under $B$ where we mark the position in $B$
	for each node $u$ of $T$ where we can store the given number for $u$.
    (Note that, if $|\rm{color}(u)| = 0$ for all $u$ in $T$ we can calculate the exact positions without $P$
	and write into $B[6c\cdot u_(, 6c\cdot u_)]$, by exploiting the fact that for each node $u$, 
	$u_($ and $u_)$ are the positions of the open and closed parenthesis of a node $u$, respectively.)
	Given the balanced parenthesis representation we can iterate over the nodes of $T$
    in pre-order and flip the $i$th bit in $P$ to $1$ where $i$ is initially $0$ and is incremented after
	each iteration over a node $u$ by $6c\cdot \rm{desc}(u) + |\rm{color}(u)|$.
	Since by construction the $i$th flipped bit corresponds to the value of $u$,
	we can find the position $p$ of the $i$th bit by calculating a 
	rank-select data structure on $P$ and call $p := \rm{select}(u)$,
    i.e., we use static space allocation.
	Having $p$ we realize \op{write}($u$, $x$) and by transforming $x$ into a self-delimiting number
	and write it starting at $B[i]$.
	The operation \op{read}($u$) is realized by reading a self-delimiting number starting from $B[i]$
	and transforming it into a number.
	To return \op{vector}($u$), call \op{read}($v$) for all children of $u$
    and concatenate the returned self-delimiting numbers to one bit-vector.
    
    We now argue that the bits required to store a number $x$, written for a node $u$,
    fits into $B[p_u, p_v)$ where $v$ is the next sibling of $u$ and $p_u = \rm{select}(u)$
    and $p_v = \rm{select}(v)$.
	Recall that a number $x$ can be stored as a self-delimiting
	number with $f(x) = 3$ bits if $x \le 1$ and with $f(x) = 2\lceil \log x
	\rceil + 1 \le 3\lceil \log x \rceil$ bits if $x > 1$. 
    Since $x \le \min\{2^{2c\op{desc(u)} + |\rm{color}(u)|},\mathrm{poly}(w)\}$, $x$ can be stored
    inside a space of $6c\cdot\op{desc}(u) + |\rm{color}(u)|$ bits.
	By construction of $P$ this is exactly the space between $p_u$ and $p_v$.
	
	Observe that by the pre-order traversal using the balanced parenthesis representation
	the positions of all children of $u$ are between $p_u$ and $p_v$ (if a sibling $v$ exists),
	hence writing $x$ can overwrite only the numbers of the children of $u$.
	
	Concerning the space note that the balanced parenthesis representation uses $O(n)$ bits.
	The auxiliary structures $B$ and $P$ use $O(n + \sum_{u \in V(T)}|\rm{color}(u)|)$ bits
	and a rank-select structure on $P$ can be constructed with the same space bound.
	Hence, our space bound is as stated in the lemma.
	
	The initialization of $B$, $P$ and the rank-select structure requires $O(n + \sum_{u \in V(T)}|\rm{color}(u)|)$ time,
	since it is determined by the space of $B$ and $P$. The balanced parenthesis representation can be computed
	within the same time.
    Writing and reading a number $x$ stored with $O(\log x)$ bits requires $O(\log x / w) = O(1)$ time
	on the word RAM. Reading $\rm{vector}(u)$ consequently requires $O(\rm{deg}(u))$ time. 
%
%
\end{proof}

{\bfseries (C) Computing classification numbers.} Let $D$ be our data structure
of Lemma~\ref{lem:dataSchema} where we store all classification vectors. Our
next goal is to replace the classification vector $D.\op{vector}(u)$ of all
processed subtrees with root $u$ and height $h$ by a classification number $(h,
q_u)$ with $|q_u| = O(\min\{\op{desc}(u),\log n\})$ such that the
componentwise-sorted classification vectors are equal exactly if they get the
same classification number.

Our idea is to sort $D.\op{vector}(u)$ by using Theorem~\ref{thm:sdn-sorting} to
obtain a component-wise sorted classification vector and turn this vector into a
self-delimiting number for further operation on it. We subsequently compute the
dense rank to replace the self-delimiting number in $D.\op{vector}(u)$ by the
tuple (height, dense rank).
To make it work we transform each vector into a self-delimiting number by
considering the bit-sequence of the vector as a number (i.e., assign the prefix
$0^\ell0$ to each vector where $\ell$ is the length of the vector in bits). We
can store all these vectors as self-delimiting numbers in a bit-vector $Z_{h +
1}$ of $O(n)$ bits. Then we can use Theorem~\ref{thm:dense-rank} applied to
$Z_{h + 1}$ to compute the dense ranks, which allows us to determine the
classification numbers for all subtrees of height $h+1$.\\

{\bfseries Our Algorithm.} We now combine the solutions for (A)-(C).

\begin{lemma}\label{th:treeiso} Given two rooted $n$-node trees $T_1$ and $T_2$
	there is an algorithm that recognizes if $T_1$ and $T_2$ are isomorphic in
	linear-time using $O(n)$ bits.
\end{lemma}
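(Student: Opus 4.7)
The plan is to assemble the three ingredients (A)--(C) developed above into a single pass that simulates the folklore algorithm while traversing in order of increasing height rather than decreasing depth. First, I would transform each input tree $T_i$ ($i=1,2$) into its isomorphic ordinal-tree representation using the balanced-parenthesis representation and the tree-navigation structure of Lemma~\ref{lem:balanced-tree}; this costs $O(n)$ time and $O(n)$ bits per tree. Then I would instantiate, for each tree, a height-iterator from Lemma~\ref{lem:iterHeigth} and a classification-number store $D_i$ from Lemma~\ref{lem:dataSchema} (with a suitable constant $c$), initializing $D_i.\op{write}(u)=(0,0)$ for every leaf.

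Next, I would process heights $h=0,1,2,\ldots$ synchronously in both trees. At each height $h$ I obtain the two choice dictionaries $C_1^h,C_2^h$ of nodes of that height. For every $u\in C_1^h\cup C_2^h$ I compute $D_i.\op{vector}(u)$, which is a concatenation of self-delimiting numbers of total length $O(\op{desc}(u))$, sort its components with Theorem~\ref{thm:sdn-sorting}, and emit this sorted vector, prefixed by $1^\ell0$ to form a single self-delimiting number, into a global bit sequence $Z_h$ accompanied by a note of which node and which tree it came from. Then I invoke Theorem~\ref{lem:dense-rank} on $Z_h$ to obtain the dense rank $q_u$ of each vector and call $D_i.\op{write}(u,\langle h,q_u\rangle)$, where $\langle\cdot,\cdot\rangle$ is a fixed pairing into a single integer bounded by $\mathrm{poly}(n)$; since $|q_u|=O(\min\{\op{desc}(u),\log n\})$ this is legal by Lemma~\ref{lem:dataSchema}. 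After the round, $Z_h$ and all its auxiliary structures are discarded. When both iterators are exhausted, I compare the classification numbers stored at the two roots; the trees are isomorphic iff these numbers coincide, by the invariant from~\cite{AhoHU74} carried over to our modified height-based ordering.

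Correctness reduces to the original invariant: two subtrees receive equal classification numbers iff they are isomorphic. Since a node of height $h$ has all its children at heights $<h$, the children's numbers are already finalized when the parent is processed; pairing the rank with $h$ makes the map injective across heights, so distinct subtrees never collide.

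The main obstacle is showing that the total work and space stay linear even though every round calls the sorting and dense-rank subroutines. For space, each $Z_h$ is discarded before $Z_{h+1}$ is built, so the non-persistent footprint is $O(N_h)+o(2^\tau)$ where $N_h$ is the bit length of $Z_h$; choosing $\tau=\Theta(\log n)$ (so that $2^\tau=n^{O(1)}$ and $o(2^\tau)\subseteq o(n)$) keeps the peak at $O(n)$. For time, the round at height $h$ costs $O(k_h+N_h/\log n)$ by Theorem~\ref{thm:sdn-sorting} and Theorem~\ref{lem:dense-rank}, where $k_h$ is the number of self-delimiting numbers in $Z_h$; summing, $\sum_h k_h=O(n)$ because each node contributes once as a child to one parent's classification vector, and $\sum_h N_h=O(n)$ because of the descendant-bounded length of each vector together with the additional $O(1)$ bits per child. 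Combined with the $O(n)$ total work of the height iterator and the amortized $O(1)$ per node in $D_i$, the whole algorithm runs in $O(n)$ time and $O(n)$ bits.
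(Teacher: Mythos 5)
Your proposal follows essentially the same route as the paper: convert to balanced-parenthesis/ordinal form, iterate by increasing height (Lemma~\ref{lem:iterHeigth}), store classification data in the descendant-bounded scheme of Lemma~\ref{lem:dataSchema}, sort each child vector with Theorem~\ref{thm:sdn-sorting}, pool the encoded vectors of one height from both trees, and replace them by $(h,\text{dense rank})$ via Theorem~\ref{lem:dense-rank}, with $\tau=\Theta(\log n)$ so the lookup tables cost $o(n)$.

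One accounting claim in your time analysis is wrong, though harmlessly so: $\sum_h N_h$ is \emph{not} $O(n)$ in general. Each node $v$ contributes its classification number, of length $\Theta(\min\{\op{desc}(v),\log n\})$ bits (the height component alone can need $\Theta(\log n)$ bits), to its parent's vector exactly once; on a path this already sums to $\Theta(n\log n)$. The per-round bound $N_h=O(n)$ is fine (subtrees of equal-height roots are disjoint), so your $O(n)$-bit space claim stands, and the total time also stands because the correct bound $\sum_h N_h=O(n\log n)$ divided by $\tau=\Theta(\log n)$ still gives $O(n)$ — which is exactly how the paper argues, bounding the aggregate sorting/ranking input by $O(n\log n)$ bits and $O(n)$ numbers rather than by $O(n)$ bits. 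You should replace the justification ``$\sum_h N_h=O(n)$'' by this weaker but correct bound; everything else in your argument, including the injectivity of the $(h,q_u)$ pairing and the legality of the \op{write} calls, matches the paper.
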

\begin{proof}
Computing the leaves and their classification numbers is trivial. So assume we
have computed the classification numbers for the subtrees of height $h$, and we
have to compute the classification numbers for the subtrees of height $h + 1$.
We iterate over every subtree with root $u$ and height $h + 1$ by using the
choice dictionary $\mathcal C$ returned by our iterator of
Lemma~\ref{lem:iterHeigth}.

We assume a data structure $D$ of Lemma~\ref{lem:dataSchema} is constructed.
In the following, let $k_u$ be the number of children of a node $u$, let $\bb_u$
be the number of bits of $D.\op{vector}(u)$ and let $V_i$ be the set of nodes of
height~$i$ in both trees. For each node $u$, we sort the classification numbers
of its children stored in $D.\op{vector}(u)$ into a new bit-vector $X$ of
$\bb_u$ bits by applying Theorem~\ref{thm:sdn-sorting} (with parameters $\bb =
\bb_u$, $k = k_u$ and $1\log n$) and store it by calling
$D.\op{write}(u, X)$.

We can now construct the vectors $Z^{T_1}_{h + 1}$ and $Z^{T_2}_{h + 1}$ for the
subtrees of height $h+1$ within the trees $T_1$ and $T_2$, respectively.
$Z^{T_j}_{h + 1}$ consists of the self-delimiting numbers constructed from the
vector $D.\op{vector}(u)$ of each node $u$ in $T_j$ of height $h + 1$,
with $j \in \{1, 2\}$. We then
put $Z^{T_1}_{h + 1}$ and $Z^{T_2}_{h + 1}$ together into a total sequence $Z$
and apply Theorem~\ref{thm:dense-rank} (with parameters $\bb = O(\sum_{u \in
V_{h + 1}} \bb_u)$, $k$ being the number of self-delimiting numbers in $Z$ and
$\tau=\log n$) to determine the dense rank of the numbers in $Z$.
Subsequently, we iterate over the numbers $q$ in $Z$ and use their position $p$
to retrieve their {dense} rank $r_q = R(p, q)$. In parallel, we iterate over the
subtrees of height $h + 1$ in both trees again and replace the classification
vector~$q$ by calling $D.\op{write}(u, (h + 1, r_q))$.
After computing the classification number for each node of height $h + 1$ in
$T_1$ and $T_2$, we can compare them and proceed with the next height until
reaching the root in both trees. Following this, we can easily answer the
question if $T_1$ and $T_2$ are isomorphic or not.

One can easily see that all substeps run with $O(n)$ bits. Let us next consider
the parameters of the calls to Theorem~\ref{thm:sdn-sorting}. As mentioned, we
use $\tau=\log n$ for all calls for Theorem~\ref{thm:sdn-sorting} and
Theorem~\ref{thm:dense-rank}. Parameter $k$ summed over all calls is bounded by
$\sum_u \mathrm{deg}(u)=O(n)$ and parameter $\bb$ over all calls is bounded by
$\sum_u \mathrm{deg}(u) \log n=O(n\log n)$ since we made sure that the
classification number can be stored in $O(\log n)$ bits (a classification number
is made up of tuples with values of size $O(n)$). Thus, the total time is
bounded by $O(k+\bb/\tau)=O(n)$.

We next consider the calls to Theorem~\ref{thm:dense-rank}. Since every {node}
is part of one set $V_h$, $k$ summed over all calls is bounded as before by
$O(n)$. Moreover, each {node} with its classification number is part of one
classification vector. Thus, $\bb$ over all calls is bounded as before by
$O(n\log n)$. Due to our choice of $\tau$, we obtain a total running time of
$O(n)$. 
\end{proof}

We generalize Lemma~\ref{th:treeiso} to unrooted trees by first determining the
center of a tree space efficiently, which is a set of constant size that is
maximal distance to a leaf. Using each vertex in the center as a (possible)
root, we then run our rooted isomorphism algorithm. A {\em center} of a tree
$T$ is a set consisting of either one node or two adjacent nodes of~$T$ such
that the distance to a leaf is maximal for these nodes compared to all other
nodes of~$T$. It is known that every tree has a unique
center~\cite[Theorem~4.2]{Harary91}. If two trees $T_1$ and $T_2$ are
isomorphic, then every isomorphism function maps the center nodes of~$T_1$ to
the center nodes~of~$T_2$.

\begin{lemma}\label{lem:select-root} Given an $n$-node tree $T=(V, E)$ there is
	a linear-time $O(n)$-bit algorithm that determines the center of $T$.
\end{lemma}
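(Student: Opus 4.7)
I would implement the classical leaf-peeling algorithm: iteratively remove all vertices of degree~$1$ until at most two vertices remain; the surviving vertices form the center. Correctness is standard: every leaf has strictly larger eccentricity than its unique neighbor, so a leaf can lie in the center only when the tree has at most two vertices, and each peeling round preserves the center while shrinking its eccentricity by~$1$. To make this fit in $O(n)$ bits, I would first root $T$ at an arbitrary vertex and build its balanced parenthesis representation together with the navigation of Lemma~\ref{lem:balanced-tree} in $O(n)$ time and $O(n)$ bits. I would then maintain (i)~an $n$-bit vector $R$ marking removed vertices, (ii)~two choice dictionaries $L$ and $L'$ holding the leaves of the current and next round, (iii)~a counter $n_{\text{rem}}$ of still-present vertices, and (iv)~a structure $D$ storing the current unrooted degree $d_u$ of every vertex~$u$. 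Initially I fill $D$ with the unrooted degrees (number of children, plus one iff $u$ has a parent in the rooted version), insert every $u$ with $d_u = 1$ into $L$, and set $n_{\text{rem}} = n$.

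The main technical point is storing $D$ in $O(n)$ bits with $O(1)$-time access and update. Because $\sum_u d_u = 2(n-1)$ and $\lceil \log(d_u+1) \rceil \le d_u + O(1)$, allocating $\lceil \log(d_u+1) \rceil + O(1)$ bits per vertex uses $O(n)$ bits in total. I would lay out these slots consecutively in a bit vector $W$ and record the starting position of each slot by a one-bit in an auxiliary $O(n)$-bit vector $M$; equipping $M$ with a rank--select structure (built in $O(n/w)$ time) yields constant-time access to each vertex's slot via $\op{select}$. Since degrees only decrease during the algorithm, the slot sized for the initial degree always suffices for the current value.

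The main loop repeats as long as $n_{\text{rem}} > 2$: for each $v \in L$, scan $v$'s neighbors in $T$ by tree navigation to find the unique $u$ with $R[u] = 0$, set $R[v] = 1$, decrement both $n_{\text{rem}}$ and $d_u$ in $D$, and insert $u$ into $L'$ exactly if the decrement produced $d_u = 1$; at the end of the pass swap the roles of $L$ and $L'$. Every vertex enters $L$ at most once (its degree drops to $1$ exactly once before it is removed), and the work charged to processing $v$ is $O(\op{deg}(v))$, so the total running time is $O(\sum_v \op{deg}(v)) = O(n)$. After the loop, one pass over $R$ returns the one or two surviving vertices, which are the center of $T$. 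A handful of degenerate situations (e.g.\ $n \le 2$, or a round that also decrements an already-leafed neighbour to degree~$0$) are handled automatically because the final answer is read from $R$ rather than from $L$. The main obstacle is precisely the variable-length $O(n)$-bit storage of the degree counters; once that gadget is in place, everything else is a routine adaptation of leaf-peeling to the space-efficient framework developed earlier in the paper.
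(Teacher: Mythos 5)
Your proposal is correct and follows essentially the same route as the paper: Harary-style leaf peeling simulated space-efficiently with two choice dictionaries for the current and next batch of leaves, a survivor counter, and the degrees kept as variable-length numbers accessed via static space allocation and decremented in place, giving $O(n)$ time and $O(n)$ bits. The only cosmetic differences (rooting the tree to reuse the parenthesis navigation, and reading the answer from the removed-vertex bit vector rather than from the final dictionary) do not change the argument.
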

\begin{proof}
	Harary~\cite[Theorem~4.2]{Harary91} determines the center of a tree in
	rounds where, in each round, all nodes of degree~$1$ are removed. The
	algorithm stops if at most two nodes are left, which are selected as the
	center.

    We follow a similar approach, but cannot afford to manipulate the given
    trees. For our approach we need to be able to read and reduce the degree of
    a node. We also can not use Lemma~\ref{lem:iterHeigth} since it requires a
    rooted tree. Alternatively, we store the initial degrees of
        all nodes as self-delimiting numbers by using static space allocation
    (Section~\ref{sec:preone}).
	
	Similar to Harary's algorithm, our algorithm also works in rounds. We use a
	choice dictionary $C$ consisting of the nodes of degree one, which we
	initially fill by scanning through the whole tree. Then, while we iterate
	over each node $u = C.\op{choice}$, we delete $u$ from $C$, increment an
	initial-zero global counter $k$ to be able to determine how many nodes
	remain in the tree, and decrement the degree of the only node $v$ (neighbor
	of $u$) with $\op{deg}(v) \ge 2$. Intuitively speaking this simulates the
	deletion of $u$. If the degree of $v$ is 1 after decrementing, we add $v$ in
	a second choice dictionary $C'$. After the iteration over the nodes in $C$
    is finished, we swap the meaning of $C$ with $C'$. If at most two nodes remain (i.e, $k \ge
	n - 2$), we output the nodes in $C$ as the center of the tree. Otherwise, we
	start the next round.
	 
	Storing two choice dictionaries and all degrees can be done with $O(n)$
	bits. Moreover, since the total degree in a tree is bounded by $O(n)$, our
	approach runs in $O(n)$ time. 
\end{proof}

We can check isomorphism for two given non-rooted trees $T_1$ and $T_2$ by first
computing the center $C_{T_1}$ and $C_{T_2}$ of $T_1$ and $T_2$, respectively.
Then, we iterate over the $4$ possibilities to choose a root $r_{T_1} \in
C_{T_1}$ and $r_{T_2} \in C_{T_2}$ and apply Lemma~\ref{th:treeiso}. $T_1$ and
$T_2$ are isomorphic exactly if one possible rooted versions of $T_1$ and $T_2$
are isomorphic. We thus can conclude the next theorem.

\begin{theorem}\label{th:unrooted-treeiso} Given two (unrooted) trees $T_1$ and
	$T_2$ there  is an algorithm that outputs if $T_1$ and $T_2$ are isomorphic
	in linear-time using $O(n)$ bits.
\end{theorem}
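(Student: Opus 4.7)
The plan is to reduce unrooted to rooted tree isomorphism (Theorem~\ref{th:treeiso}) using the canonical choice of root provided by a tree's center. First I would apply Lemma~\ref{lem:select-root} to both inputs to obtain the centers $C_{T_1}$ and $C_{T_2}$ in linear time and $O(n)$ bits; by Harary's result each center has size $1$ or $2$. Since every isomorphism of trees must map centers onto centers, a necessary condition is $|C_{T_1}|=|C_{T_2}|$, and I would immediately reject otherwise.

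Next I would enumerate the at most four pairs $(r_1,r_2) \in C_{T_1} \times C_{T_2}$. For each such pair I would view $T_1$ as rooted at $r_1$ and $T_2$ as rooted at $r_2$, convert these into the representation required by Theorem~\ref{th:treeiso} via a single DFS each (producing, e.g., a balanced parenthesis sequence with the tree navigation of Lemma~\ref{lem:balanced-tree}), and invoke the rooted isomorphism test. The algorithm outputs \emph{isomorphic} iff at least one of these pairs yields isomorphic rooted trees. Correctness follows in both directions: if any rooted pair is isomorphic then $T_1$ and $T_2$ are trivially isomorphic as unrooted trees, and conversely, if $f\colon T_1 \to T_2$ is an unrooted isomorphism, then $f$ restricts to a bijection between $C_{T_1}$ and $C_{T_2}$, so for any $r_1 \in C_{T_1}$ choosing $r_2 = f(r_1)$ makes $f$ also an isomorphism of the corresponding rooted trees, which Theorem~\ref{th:treeiso} will detect.

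For the resource bounds, the center computation and each of the (at most four) rooted isomorphism tests individually use $O(n)$ time and $O(n)$ bits, so the totals remain $O(n)$ and $O(n)$ bits. The only mild obstacle is making sure that the auxiliary structures built for one rooting attempt (the balanced parenthesis representation and navigation tables, the height iterator of Lemma~\ref{lem:iterHeigth}, the classification store of Lemma~\ref{lem:dataSchema}, and the sorting/ranking workspaces from Theorems~\ref{thm:sdn-sorting} and~\ref{lem:dense-rank}) are released before the next pair $(r_1,r_2)$ is tried; since all of them are rebuilt from $T_i$ and the chosen root on the fly, this is handled by sequentially running the four attempts in the same $O(n)$-bit workspace.
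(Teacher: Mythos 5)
Your proposal is correct and matches the paper's argument: compute the centers via Lemma~\ref{lem:select-root}, try the at most four root pairs from $C_{T_1}\times C_{T_2}$, and invoke the rooted test of Theorem~\ref{th:treeiso}, with correctness resting on the fact that any isomorphism maps centers onto centers. The extra remarks on rejecting when $|C_{T_1}|\ne|C_{T_2}|$ and on reusing the $O(n)$-bit workspace across attempts are consistent with (and slightly more detailed than) the paper's proof.
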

We finally consider isomorphism on node-colored trees.
We view node-colored trees as trees where
each node has a color assigned in the form of a bit string.
Two colored trees $T_1 = (V_{T_1}, E_{T_1})$ and $T = (V_{T_2}, E_{T_2})$ are
isomorphic if an isomorphic mapping $f: V_{T_1} \rightarrow V_{T_2}$ exists such
that for each $u \in V_{T_1}$, $u$ and $v = f(u)$ have the same color assigned.

To check isomorphism of uncolored trees we use a classification number whose
computation was only influenced by the structure of the tree. To check
isomorphism of node-colored trees we additionally use the color of a node to compute
a classification number, i.e., we add to each classification vector of a node
$u$ the color of $u$ as the last entry of the vector.


\begin{theorem}\label{lm:colored-trees} Given two (unrooted) colored $n$-node
	trees $T_1$ and $T_2$, there is an algorithm that outputs if $T_1$ and $T_2$
	are isomorphic in $O(n + b)$ time using $O(n + b)$ bits exactly if the colors of
	all nodes of $T_1$ and $T_2$ can be written within~$O(b)$ bits.
\end{theorem}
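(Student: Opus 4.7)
The plan is to reduce the colored problem to the uncolored rooted algorithm of Lemma~\ref{th:treeiso} by incorporating each vertex's color into its classification vector. First, I would compute the centers of $T_1$ and $T_2$ using Lemma~\ref{lem:select-root} and iterate over the $O(1)$ pairings of centers as roots, analogous to Theorem~\ref{th:unrooted-treeiso}; this uses $O(n)$ bits and is oblivious to colors.

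Second, in the rooted procedure of Lemma~\ref{th:treeiso}, I would change only the construction of each node's classification vector: before sorting the children's classification numbers, I append the self-delimiting representation of the node's own color, taken verbatim from the input. Leaves thereby acquire color-dependent classification numbers already in the first round. The correctness invariant is the straightforward color-aware analog of Definition~\ref{def:treeiso}: a rooted colored tree is uniquely determined by the multiset of isomorphism classes of its maximal rooted colored subtrees together with the color of its root. Hence two subtrees obtain the same classification number iff they are isomorphic as colored rooted trees, and the induction of Lemma~\ref{th:treeiso} carries over verbatim.

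Third, the sorting and dense-ranking steps of Theorems~\ref{thm:sdn-sorting} and~\ref{lem:dense-rank} operate on arbitrary self-delimiting sequences and accept the enlarged vectors unchanged, so no adaptation of those subroutines is needed. The main obstacle is adjusting the storage scheme of Lemma~\ref{lem:dataSchema}: in the uncolored case its backing bit-vector of $O(n)$ bits suffices because the classification number of $u$ can be encoded in $O(\op{desc}(u))$ bits, but once colors are present the same counting argument only gives a bound of $O(\op{desc}(u)+b_u)$ bits, where $b_u$ is the total self-delimiting length of the colors appearing in the subtree of $u$. I would therefore enlarge the backing bit-vector of Lemma~\ref{lem:dataSchema} to $O(n+b)$ bits and give the region of node $u$ size proportional to $\op{desc}(u)+b_u$; these region boundaries can be precomputed in one traversal from the balanced-parenthesis representation together with the color sequence. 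Since descendant regions still nest inside ancestor regions and the ``live'' classifications at any moment lie on an antichain whose subtrees are vertex-disjoint, the stored data totals $O(n)+O(\sum_{u\in\mathrm{antichain}} b_u)=O(n+b)$ bits. The time analysis follows the telescoping argument of Lemma~\ref{th:treeiso}, but with $\bb=O(n+b)$ in the invocations of Theorems~\ref{thm:sdn-sorting} and~\ref{lem:dense-rank}; choosing $\tau=\Theta(\log(n+b))$ keeps the lookup tables cheap, so the overall running time is $O(n+b)$, linear in the input size.
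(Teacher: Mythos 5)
Your overall route is the paper's: root at the centers, augment each node's classification vector with the node's color, and reuse the sorting/dense-ranking machinery. But there is one genuine flaw in the way you inject the color. You append the color \emph{before} sorting, so it participates in the sort of Theorem~\ref{thm:sdn-sorting} and ends up as an untagged entry somewhere inside the sorted multiset. The resulting vector then only records the multiset consisting of the children's classification numbers \emph{together with} the color, not which entry is the color, and that is strictly less information than your stated invariant (``multiset of child-subtree classes together with the color of the root'') requires. Concretely, a node of color $a$ whose single child subtree has classification encoding $b$ and a node of color $b$ whose single child subtree has classification encoding $a$ produce the same sorted vector, hence the same dense rank and the same classification number, although they are not colored-isomorphic; since colors range over $\{0,\ldots,n-1\}$ and classification numbers are themselves just integer encodings of pairs $(h,q)$, such coincidences cannot be excluded without tagging the color. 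The paper avoids this by sorting \emph{only} the children's classification numbers and then adding the color as a distinguished last entry of the vector, immediately before the vector is turned into a self-delimiting number for dense ranking. A second, minor omission: the balanced-parenthesis representation relabels the nodes by preorder, so the color of a node cannot be taken ``verbatim from the input''; the paper first writes the colors, in preorder, into a separate $O(b)$-bit array with static space allocation so that they can be fetched in constant time under the new labels.

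Apart from this, your proposal matches the paper, and in one respect it is more careful: the paper's proof does not discuss storage at all, although with colors the dense rank of, say, a leaf can be as large as $n$ and then no longer obeys the $2^{O(\op{desc}(u))}$ bound required by $\op{write}$ in Lemma~\ref{lem:dataSchema}. Your fix --- enlarging node $u$'s region to $O(\op{desc}(u)+b_u)$ bits, with boundaries precomputed from the parenthesis representation and the color sequence, and the antichain/nesting argument giving $O(n+b)$ bits and $O(n+b)$ time with $\tau=\Theta(\log(n+b))$ --- is sound and fills that gap. So the only substantive repair needed is to keep the color out of the sort (fixed last position or an explicit tag); with that change your argument goes through.
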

\begin{proof}

We assume w.l.o.g. that both trees are rooted, otherwise we do the same as for
uncolored tree, i.e., find the centers of both trees with Lemma~\ref{lem:select-root}
and repeat this theorem with the 4 possibilities to choose the correct roots.
We start by describing a pre-processing
step. We describe this for an $n$-node tree $T$. 
The process is then applied to $T_1$ and $T_2$, respectively.
Let $\rm{color}(u) \in \{0,1\}^*$ be the color of a node $u$ in $T$.
First we transform $T$ into a 
colored ordinal tree (by using Corollary~\ref{cor:constructcoloredtree})
represented by a balanced parenthesis representation. 

Our general approach to check isomorphism remains the same, but we have to make sure
that two nodes with non-isomorphic maximal subtrees cannot get the 
same classification number. We do this by adding the color of a node
to the classification number, which changes the classification number from
a tuple $(h_u, q_u)$ to a triple $(h_u, q_u, \rm{color}(u))$,
where $h_u$ is the height of $u$ in $T$, $q_u$ is the number representing
the structure of the maximal induced subtree and $\rm{color}(u)$ is (as defined)
the color of $u$. The classification vector of the parent $v$ of $u$ includes then
the color information of all its children, including $u$. By our computation of
the classification number of $v$ (described in (C)) we ensure that the roots
of the two trees get the same classification numbers exactly if their respective
colors can be mapped in addition to having isomorphic subtrees.

To maintain the classification numbers and vectors we still use the data schema 
of Lemma~\ref{lem:dataSchema}, but have to account for the additional space
of the colors.

Without modifications the running time increases to $O(n + b)$ due to
the construction of the balanced parenthesis representation of the ordinal tree
with access to its colors and Lemma~\ref{lem:dataSchema}.
For the same reasons our space bound increases to $O(n + b)$ bits.
\end{proof}

\section{Forest Isomorphism}\label{sec:forest-iso}
We now describe how to extend our algorithm to forests.
The idea is to perform a problem reduction, from (colored) forest to colored tree isomorphism by
carefully transforming the given forests into trees without altering their isomorphic properties.
We then solve colored tree isomorphism which is also a solution for forest isomorphism.

A naive approach to transforming a forest into a single tree is to create a new node and 
connect it to an arbitrary node from each tree in the forest. However, this method is unreliable, 
as selecting the incorrect node for connection might lead to generating non-isomorphic trees 
from isomorphic forests, or vise-versa.

To address this, we utilize the fact that the bijective mapping between two isomorphic trees, 
$T$ and $T'$, includes a mapping that aligns their centers. These centers can be either 
{\em unicentric} (a single node) or {\em bicentric} (a pair of nodes). In unicentric cases, 
the mapping between centers is straightforward and unique. However, in bicentric situations, 
the correct mapping is ambiguous due to the two possible ways to align the pair of nodes at each center.

To resolve this ambiguity, we introduce a colored unicentric node~$w$ 
whenever a bicentric pair is identified, as per Rule~1 below 
(see Def.~\ref{def:forest-to-tree-reduction} and Fig.~\ref{fig:forest-to-tree-transformation}).
This step allows us to connect the unicentric node of each tree in the forest to a new, 
single node~$r$, effectively forming one comprehensive tree, as defined in Rule~2.

\begin{definition}[Forest to tree reduction]\label{def:forest-to-tree-reduction}
	Choose a new arbitrary color~$c$.
	\begin{description}
	\item[Rule 1] For each connected component with a bicentric pair of nodes $\{u, v\}$, 
introduce a new $c$-colored node $w$, connecting $w$ to both $u$ and $v$ 
while disconnecting $u$ from $v$.
	\item[Rule 2] Insert a $c$-colored node $r$ and connect it to the center of each connected component.
\end{description}
\end{definition}

\begin{figure}[h!]
	\centering
	\begin{subfigure}{0.5\textwidth}
		\centering
		\begin{tikzpicture}
  \node[circle, draw, minimum size=6pt, inner sep=0] (a) at (0,0) {};
  \node[circle, draw, minimum size=6pt, inner sep=0] (a0) at (-0.3,-0.5) {};
  \node[circle, draw, minimum size=6pt, inner sep=0] (a1) at (0.3,-0.5) {};
  \node[circle, draw, minimum size=6pt, inner sep=0] (a2) at (-0.55,-1) {};
  \node[circle, draw, minimum size=6pt, inner sep=0] (a3) at (-0.1,-1) {};
  \node[circle, draw, minimum size=6pt, inner sep=0] (a4) at (0.3,-1) {};

  \draw (a) -- (a0);  
  \draw (a) -- (a1);
  \draw (a0) -- (a2);
  \draw (a0) -- (a3);
  \draw (a1) -- (a4);
  
  \node[circle, draw, minimum size=6pt, inner sep=0, label=left:{\small$u$}] (b0) at (1.2,-0.5) {};
  \node[circle, draw, minimum size=6pt, inner sep=0, label=right:{\small$v$}] (b1) at (1.8,-0.5) {};
  \node[circle, draw, minimum size=6pt, inner sep=0] (b2) at (1,-1) {};
  \node[circle, draw, minimum size=6pt, inner sep=0] (b3) at (1.4,-1) {};
  \node[circle, draw, minimum size=6pt, inner sep=0] (b4) at (1.8,-1) {};
     
  \draw (b0) -- (b1);
  \draw (b0) -- (b2);
  \draw (b0) -- (b3);
  \draw (b1) -- (b4);
		\end{tikzpicture}
	\end{subfigure}%
	\begin{subfigure}{0.5\textwidth}
		\centering
				\begin{tikzpicture}

  \node[circle, draw, minimum size=6pt, inner sep=0, fill=red, label=right:{\small$r$}] (r) at (0.8,0.5) {};
  \node[circle, draw, minimum size=6pt, inner sep=0, fill=red, label=right:{\small$w$}] (w) at (1.5,0) {};
  	
  \node[circle, draw, minimum size=6pt, inner sep=0] (a) at (0,0) {};
  \node[circle, draw, minimum size=6pt, inner sep=0] (a0) at (-0.3,-0.5) {};
  \node[circle, draw, minimum size=6pt, inner sep=0] (a1) at (0.3,-0.5) {};
  \node[circle, draw, minimum size=6pt, inner sep=0] (a2) at (-0.55,-1) {};
  \node[circle, draw, minimum size=6pt, inner sep=0] (a3) at (-0.1,-1) {};
  \node[circle, draw, minimum size=6pt, inner sep=0] (a4) at (0.3,-1) {};

  \draw (a) -- (a0);  
  \draw (a) -- (a1);
  \draw (a0) -- (a2);
  \draw (a0) -- (a3);
  \draw (a1) -- (a4);

  \node[circle, draw, minimum size=6pt, inner sep=0, label=left:{\small$u$}] (b0) at (1.2,-0.5) {};
  \node[circle, draw, minimum size=6pt, inner sep=0, label=right:{\small$v$}] (b1) at (1.8,-0.5) {};
  \node[circle, draw, minimum size=6pt, inner sep=0] (b2) at (1,-1) {};
  \node[circle, draw, minimum size=6pt, inner sep=0] (b3) at (1.4,-1) {};
  \node[circle, draw, minimum size=6pt, inner sep=0] (b4) at (1.8,-1) {};

  \draw (b0) -- (b2);
  \draw (b0) -- (b3);
  \draw (b1) -- (b4);

  \draw (r) -- (a);
  \draw (r) -- (w);
  \draw (w) -- (b0);
  \draw (w) -- (b1);

		\end{tikzpicture}
	\end{subfigure}
	\caption{Left: a forest where one tree has a center consisting of one node
	and a tree that has a center consisting of two nodes $\{u, v\}$. Right: The
	forest transformed to a unique tree, where centers consisting of two nodes
	are ``extended'' with a new colored center~$w$ and all trees are connected
	to a colored root $r$.}\label{fig:forest-to-tree-transformation}
\end{figure}

We want to remark that all newly introduced nodes have to be colored distinctively
from the already existing nodes, in the case we are working with node-colored
forests. This coloring ensures that these nodes can only be mapped to other
introduced nodes. Consequently, Rule~1 does not alter any tree in a forest in
such a way that it becomes isomorphic to a tree in another forest that already
possessed a unicentric node. We refer to a reduction of a forest $F$ to a tree
$T$ according to Definition~\ref{def:forest-to-tree-reduction} as a \textit{tree
reduction of $F$}.

The next lemma shows that our reduction from forest to tree isomorphism
preserves their isomorphic property.

\begin{lemma}\label{lem:safe-transformation} 
	Two forests $F, F'$ are isomorphic exactly if their respective
	tree reductions $T, T'$ are isomorphic.
\end{lemma}
\begin{proof}
To show our claim we proof both directions.
\begin{itemize}
	\item[$\Rightarrow$]
	To demonstrate this implication, we must establish a bijection $f'$ between $T$ and $T'$.
	Assume that $f$ is an existing bijection between $F$ and $F'$.
	
	Observe that our transformation exclusively adds colored nodes, but does
	not remove any nodes.
	Our goal is to extend $f$ to $f'$ by incorporating mappings of the 
	newly introduced colored nodes.

	First, consider the trees (in the forests $F$ and $F'$) that have a
	unicentric node. Those trees are mapped by $f$ from $F$ to $F'$ and the same
	mapping can be used as mapping $f'$ to map these nodes from $T$ to $T'$.
	Next, let us consider the trees that are bicentric. Recall that $f$ maps two
	center nodes $u$ and $v$ of a bicentric tree of $F$ to two center nodes of a
	bicentric tree $u'$ and $v'$ in $F'$. Then our construction introduces
	colored nodes~$w$ and~$w'$ to $T_F$ and $T_{F'}$, respectively, connected to
	each of these center nodes. First note that when removing all edges between
	nodes that form a bicenter in $F$ and $F'$ during the process of creating a
	tree reduction, the isomorphism $f$ remains a valid isomorphism. Next, our
	tree reduction introduces respective parent nodes $w$ and $w'$ of $\{u, v\}$
	and $\{u', v'\}$. Again, $f$ can be trivially extended to map $w$ to $w'$ as
	$w$ and $w'$ have an equal number of isomorphic subtrees. Finally, we add
	root $r$ and $r'$ to produce $T$ and $T'$, and simply extending $f$ to map
	$r$ to $r'$ produces our desired isomorphism $f'$ that maps $T$ to $T'$.

%

	\item[$\Leftarrow$] 
	To demonstrate this implication, we aim to establish a bijection 
	$f$ between $F$ and $F'$. Assume that $f'$ is a bijection between $T$ and $T'$.
	Then restricting $f'$ to the nodes of $F$ gives us a valid bijection from
	$F$ to $F'$.

	The transformation introduces mappings in $f'$ between colored nodes. 
	Since no nodes are removed in this process, eliminating these specific 
	colored node mappings from $f'$ yields the bijection $f$.
\end{itemize}
Showing both directions concludes our proof.
\end{proof}

We now focus on the realization of the reduction rules, i.e., on an algorithm
 and a data structure representing the tree resulting from the reduction rules
 of Def.~\ref{def:forest-to-tree-reduction}. The technical parts and the
 management of colors are restricted to the proof of
 Lemma~\ref{lem:unique-tree-transformation}, and we focus next on the crucial
 ideas of our approach.
 
Recall that a tree can be represented by a balanced parenthesis
for which, by Lemma~\ref{lem:balanced-tree}, an auxiliary data structure
can be created that allows tree navigation. 
Our goal is to create a balanced parenthesis representation for the tree resulting
from the application of our reduction rules.

Concerning our reduction rules, observe that for each vertex $v$ the parenthesis
representation of each maximal induced subtree rooted at a child
of $v$ is stored in-between the open and the matched closed parenthesis of~$v$.
This means that we can realize Rule~2, where we have to introduce a new root
node $r$, by writing an open and closed parenthesis and printing the parenthesis
representation of each tree in the forest in-between the open and closed
parenthesis of~$r$. Depending on the node with which we start to print the
parenthesis representation of each tree in the forest the node becomes the child
of~$r$. Hence, we determine the center of the tree first. If it is unicentric we
can start printing the balanced representation of the tree starting with the
center. Otherwise, it is bicentric consisting of nodes $\{u, v\}$, and we have
to apply Rule~1. As described in Rule~1 we have to introduce a new node $w$,
which we do by writing an open parenthesis and print the balanced parenthesis
representation of the maximal subtree $u$ and $v$ ignoring the edge $\{u, v\}$,
and conclude with a closed parenthesis for~$w$. In the balanced parenthesis
representation $v$ and $u$ become children of $w$ and have no edge connecting
them, realizing Rule~1.

\begin{lemma}\label{lem:unique-tree-transformation}
	Given an $n$-node colored forest $F$
        with $b$ being the bits used to store all colors,
        there is a linear-time $O(n + b)$-bit algorithm
	computing an $O(n+b)$-bit data structure that gives access to a tree $T$ obtained from the 
	tree reduction of $F$ (Def.~\ref{def:forest-to-tree-reduction}) and allows
	access to the color of each node.
\end{lemma}
\begin{proof}
We represent the structure of the resulting tree $T$ by using a balanced parenthesis representation.
To store the representation we use a bit vector $B_T$, for which we allocate $(3n + 2)$-bits.
The space requirement arises from $2x$ bits to represent $x$ nodes,
another $x$ bits, since we add (by Rule~1 of Def.~\ref{def:forest-to-tree-reduction}) one node per connected 
component with a center that is bicentric
of which we can have at most $\lfloor x/2 \rfloor$ of,
and another $2$ bits since we add one node as the new root.

Recall that during the process of constructing a tree reduction we introduce a
new color $c$ for the nodes we insert. To distinguish $c$ from the colors
already present in $F$, we represent $c$ with a single bit set to $1$, and
prepend $0$ to all colors of $F$. This increases the required bits to store all
colors by $n$-bits. We concretely store these new colors in a
single bit vector $B_C$ with $|B_C|=O(n+b)$ combined static space allocation
(Section~\ref{sec:preone}), which allows us random access queries to the colors
stored in $B_C$.

We now focus on the construction of the balanced parenthesis representation
and assume the data structures above are handled appropriately.
We first create a root $r$ by writing an open 
parenthesis into $B_T$ and write the color $c$~in $B_C$.

Iterate over the forest, and for each tree determine its center (using
Lemma~\ref{lem:select-root}), which is either of the form $C_1 = \{v\}$ or $C_2
= \{v, u\}$. In the case of $C_1$ traverse the respective tree via a DFS to
write its parenthesis representation (rooted at $v$) into $B_T$ and the respective
node colors
into $B_C$. In case $C_2$ first create the artificial
node~$w$ by writing an open parenthesis into $B_T$ and write $c$ into $B_C$ as
the color of~$w$. Then ignoring the edge $\{v, u\}$, which splits the tree into
two subtrees, first print the parenthesis representation of the subtree rooted
at $v$ followed by the parenthesis representation of the subtree rooted at $u$.
Then, close the parenthesis for~$w$. After dealing with all trees, close the
parenthesis for $r$.

$B_T$ and $B_C$ use $O(n)$ and $O(n+b)$ bits, respectively. The remaining space
for variables and pointers is negligible. Thus, constructing the tree reduction
$F$ takes $O(n + b)$ time and bits.
\end{proof}

By Lemma~\ref{lem:unique-tree-transformation} we can solve colored forest isomorphism
as follows. First transform a given forest into its unique colored tree and then solve
colored tree isomorphism. We so can conclude the next theorem.

\begin{theorem}\label{thm:forests} 
    Given two colored $n$-node
	forests $F_1$ and $F_2$ with $b$ being the sum of bits to write down the colors of
        all nodes of $F_1$ and $F_2$,
        there is an algorithm that outputs if $F_1$ and $F_2$
	are isomorphic in linear-time using $O(n + b)$ bits.
\end{theorem}

\section{Outerplanar Graphs}\label{sec:outerplanar}
In this section we address the isomorphism problem for outerplanar graphs.
Outerplanar graphs are planar graphs with an embedding such that all vertices are
incident to one face, usually called the \textit{outer face}.
We start with a simple algorithm for maximal outerplanar graphs, which are
outerplanar graphs to which no edge can be added without invalidating (outer)planarity.
Afterwards, we address biconnected outerplanar graphs, where
biconnected means that the graph remains connected when removing an arbitrary
vertex. Finally, we combine the results for biconnected outerplanar graphs with
our previous results on trees/forests to solve the isomorphism problem
on all outerplanar graphs. We start with some additional preliminaries
and useful data structures. Recall that we use $n$ to denote
the number of vertices of the graph at hand and $m$ the number of edges.

\textbf{Partial Embeddings.} We require that all outerplanar graphs are
\textit{partially embedded}, which we define as being able to distinguish
between edges part of the outer face, and edges part of some inner face.
Concretely, we require the ability to walk around the outer face. If the graph
is given with a geometric/combinatorial embedding, it is trivially partially
embedded. Otherwise, we can use the space-efficient algorithm of Kammer and
Meintrup~\cite{KamM23} to construct and store a partial embedding using $O(n)$
time and bits. Note that the algorithm of Kammer and Meintrup requires the graph
to be given via adjacency arrays with crosspointers. 
Thus, we assume without loss of generality that all
graphs of this section are partially embedded.

\textbf{Block-cut Trees.} A block-cut tree of a graph $G$ is a tree
representation of the biconnected components of $G$. More exactly, it contains a
\textit{cut node} for each cut vertex of $G$ and a \textit{block node} for each
biconnected component, also called a \textit{block}. A cut node is adjacent to a
block node exactly if it is contained in the block the respective block node
represents. Note that each cut node is adjacent to at least two block nodes. Due
to Kammer et al.~\cite{KamKL19} we know there exists an algorithm that finds all
cut vertices of a given graph $G$ and constructs a compact representation of the
block-cut tree. In particular, the data structure they construct allows
iterating over the nodes of the block-cut tree in DFS order and output the block
and cut vertex the respective nodes represent. In addition, each block of the
block-cut tree can be accessed in isolation, e.g., for a vertex $u$ in a block
$B$ we can iterate over all neighbors $N(u) \cap B$ in constant time per
element. Their data structure requires $O(n+m)$ time and bits to construct and
store. Note that the construction of their data structure requires
the input graph to be given via adjacency arrays with crosspointers. We refer
to a data structure that allows these operations as a \textit{compact block-cut
tree}. Internally the data structure effectively stores a compressed DFS tree
together with a special coloring of the vertices and edges. As we work with
outerplanar graphs in this section, the number of edges $m=O(n)$. We summarize
their results in the following lemma.

\begin{lemma}[\cite{KamKL19}]\label{lem:compactbct}
    Let $G$ be a connected graph with $n$ vertices and $m$ edges. We can
    construct and store a compact block-cut tree in $O(n+m)$ time and bits.
\end{lemma}

\textbf{Space-efficient subgraph.} Combining well known techniques in the field
of space-efficient algorithms, we can construct a subgraph data structure $G'$
of a given graph $G$ that allows the following operations, in addition to
standard neighborhood operations: adding/removing vertices and adding/removing
edges under the stipulation that a vertex/edge can only be added to $G'$ if it
is contained in $G$. In simpler terms: $G'$ must always remain a subgraph of
$G$. The data structure can be stored with $O(n+m)$ bits and all operations run
in constant time. It is well known that there exists a constant in-degree
(out-degree) \textit{orientation} $\mathcal{O}(G)$ of any given planar graph
$G$, which is constructed by directing each undirected edge of $G$. It can be
constructed in linear time. Combining this orientation with the previously
described subgraph data structure enables constant time adjacency queries
between any two vertices $u, v \in V(G)$. Such a data structure was previously
described by Kammer and Meintrup in~\cite{KammerM22}. We refer to this combined data
structure as \textit{space-efficient dynamic subgraph}.

\begin{lemma}[\cite{KammerM22}]\label{lem:sedyng} Let $G$ be a planar graph. A space-efficient
    dynamic subgraph $G'$ can be constructed and stored for $G$ in $O(n)$ time and
    bits.
\end{lemma}

\textbf{Duval's algorithm.}
There is a well-known algorithm due to Duval~\cite{Duval83} that takes as input
a string $S=(c_1, c_2, \ldots, c_k)$ and outputs the index $i$ such that
$S'=(c_i, c_{i+1}, \ldots, c_1, c_2, \ldots, c_{i-1})$ is the lexicographically
minimal rotation of $S$. The algorithm runs in $O(k)$ time and uses $O(\log^2
k)$ bits of auxiliary space. The algorithm does not require that
the input string is represented as an array of characters. For this we introduce the definition of so-called virtual
string: We call a deterministic function $s: [k] \rightarrow \Sigma$ a
\textit{virtual string of $S$} exactly if $s(i)$ evaluates to the character
$c_i$ in string $S=(c_1, c_2, \ldots, c_k)$. We require that $s$ can be
evaluated in constant time. All our use-cases involve strings where each
character can be represented using at a constant number of words in the word-RAM
model.

\begin{lemma}{(Duval's algorithm~\cite{Duval83})}\label{lem:duval} 
	Let $S$ be a string of length $k$ represented by
	a virtual string $s$. There is an algorithm that takes as input $s$ and
	computes the index $i$ of the character $c_i$ of $S$ such that $S'=(c_i,
	c_{i+1}, \ldots, c_1, c_2, \ldots, c_{i-1})$ is the lexicographically
	minimal rotation of $S$. The algorithm runs in $O(k)$ time and uses
	$O(\log^2 k)$ bits of auxiliary space
\end{lemma}

\subsection{Warm up: Maximal Outerplanar Graphs}

As mentioned, maximal outerplanar graphs are outerplanar graphs to which no
additional edges can be added without invalidating (outer)planarity. Due to
Beyer and Mitchell~\cite{MitBJ79} an algorithm exists that takes a maximal
outerplanar graph and computes a string representation, such that for any two
maximal outerplanar graphs $G$ and $G'$, the string representation is the same
exactly if $G$ and $G'$ are isomorphic. This is referred to as a
\textit{canonical representation}. The algorithm makes use of the fact that
every biconnected outerplanar graphs has a unique hamilton cycle (all maximal
outerplanar graphs are biconnected), where a \textit{hamilton cycle} is a simple
cycle in the graph that contains all vertices. In biconnected outerplanar graphs
this is simply a walk around the outer face in clockwise or counterclockwise
order. For a maximal outerplanar graph $G$ Beyer and Mitchell's algorithm first
computes a hamilton cycle $H$. Then a so-called \textit{degree sequence $S$} of
$H$ is computed, such that $S$ is a representation of $H$ that only stores the
degree of each
vertex instead of the vertex itself.

Beyer and Mitchell showed that $S$ is canonical when taking into account
possible rotation and reversal of the hamilton cycle. To test isomorphism
between two maximal outerplanar graphs $G$ and $G'$, they describe an algorithm
that first computes a degree sequence $S$ and $S'$ of $G$ and $G'$,
respectively. Then, the sequences are checked for equality, such that two
sequences are defined as equal if one of them can be rotated and/or reversed to
be equal to the other. We replace the final step with an arguably simpler
procedure (which effectively does the same). Among all possible rotations and
reversals of a degree sequence $S$ we denote with $S_{\texttt{min}}$ the
lexicographically minimal such degree sequence. The lexicographically minimal
degree sequence gives us a canonical representation. Isomorphism can then be
solved by simply comparing lexicographically minimal degree sequences directly. 

In the following let $G$ be a maximal outerplanar graph. Simply storing degree
sequences requires $\Theta(n \log n)$ bits, as we must store the degree of $n$
vertices. To reduce the space to $O(n)$, we store each value as a
self-delimiting number. As the sum of all degrees in a maximal outerplanar graph
is $O(n)$, this requires $O(n)$ bits. To compute the lexicographically minimal
degree sequence $S_{\texttt{min}}$ we first compute two sequences
$S_{\texttt{cw}}$ and $S_{\texttt{cc}}$, via a walk on the outer face in
clockwise and counterclockwise direction, respectively. Then, we compute the
respective lexicographically minimal rotation $S'_{\texttt{cw}}$ and
$S'_{\texttt{cc}}$ using Duval's algorithm (Lemma~\ref{lem:duval}). Then
$S_{\texttt{min}}=\texttt{min}\{S'_{\texttt{cw}}, S'_{\texttt{cc}}\}$. The
respective initial degree sequences are computed by a simple walk around the
outer face. For this we need to know \textit{outer edges}, i.e., the edges
incident to the outer face. The sketched algorithm requires a partial embedding
of $G$. As mentioned, one can compute such an embedding
space-efficiently~\cite{KamM23}.

We want to remark that for maximal outerplanar graphs we can show a much simpler
space-efficient partial embedding algorithm, presented as the next lemma.

\begin{lemma}\label{lem:mopemb}
    Let $G=(V, E)$ be a maximal outerplanar graph. We can compute and store a
    partial embedding of $G$ in $O(n)$ time and bits.
\end{lemma}
\begin{proof}
    In the following we make use of the well-known fact that maximal outerplanar
    graphs contain at least $2$ vertices of degree $2$.

    We require the following variables. First, a choice dictionary
    (Lemma~\ref{lem:cd}) $Q$ that acts as a queue of vertices of degree $2$,
    initially containing all such vertices of $G$. Secondly, two choice
    dictionaries $F_{\text{inner}}$ and $F_{\text{outer}}$ used for marking
    edges as either "inner" or "outer", initially these two choice dictionaries
    are empty. And thirdly, a dynamic space-efficient subgraph data
    structure~(Lemma~\ref{lem:sedyng}) storing a subgraph $G'$ of $G$. Initially
    $G'=G$. We then iteratively pop vertices $u$ from $Q$ (until $Q=\emptyset$)
    and do the following procedure: let $v, w$ be the neighbors of $u$ in $G'$.
    Add the edge $\{u, v\}$ to $F_{\text{outer}}$ if it is not contained in
    $F_{\text{inner}}$, and analogously for edge $\{u, w\}$. Then, add the edge
    $\{w, v\}$ to $F_{\text{inner}}$. Note that this edge exists as all inner
    faces of a maximal outerplanar graph contain exactly $3$ vertices. Remove
    the edges $\{u, v\}$ and $\{u, w\}$ from $G'$. Update $Q$ by adding $u$ and
    $w$ exactly if their respective degrees equals $2$ in $G'$. Note that after
    each iteration the graph $G'$ remains maximal outerplanar (we view isolated
    vertices in $G'$ as deleted). This means we always find a vertex of degree $2$ in
    the next iteration (or we are finished).

    Once this procedure terminates, $F_{\text{outer}}$ contains all edges part
    of the outer face, and $F_{\text{inner}}$ contains all other edges. Each
    iteration of the aforementioned loop requires $O(1)$ time and there are
    $O(n)$ iterations total. All data structures require $O(n)$ bits.
\end{proof}

Using the previous lemma we can easily implement the outlined algorithm to
compute a canonical representation $G_{\texttt{can}}=S_{\texttt{min}}$ of a
maximal outerplanar graph in linear time and bits.

\begin{theorem}\label{thm:maxoutercan}
    Let $G$ be a maximal outerplanar graph. We can compute and store a canonical
    representation $G_{\texttt{can}}$ of $G$ in $O(n)$ time and bits.
\end{theorem}

A simple isomorphism algorithm follows by comparing the canonical
representations.

\begin{corollary}\label{cor:maxouteriso}
    Let $G$ and $G'$ be two maximal outerplanar graphs. We can test isomorphism
    of $G$ and $G'$ in $O(n)$ time and bits.
\end{corollary}
\begin{proof}
    Using Theorem~\ref{thm:maxoutercan}, compute canonical representations
    $G_{\texttt{can}}$ and $G'_{\texttt{can}}$ of $G$ and $G'$, respectively.
    Then $G$ and $G'$ are isomorphic exactly if their canonical representations
    are equal.
\end{proof}

\subsection{Biconnected Outerplanar Graphs}

We continue with biconnected outerplanar graphs. Recall that a biconnected graph
is a connected graph where the removal of any single vertex does
not disconnect the graph. On a high level, we handle
biconnected outerplanar graphs similar as maximal outerplanar graphs: we compute
a canonical representation based on a hamilton cycle. Contrary to maximal
outerplanar graphs, where the simple degree sequence uniquely identifies
isomorphic graphs, this is not the case for biconnected outerplanar graphs.
Colbourn and Booth~\cite{ColbournBK81} showed the following canonical
representation of biconnected outerplanar graphs, which we show how to construct
and store space-efficiently. Let $G$ be a biconnected outerplanar graph. First,
construct a hamilton cycle $H$ of $G$. For two vertices $u, v \in V(G)$ denote
with \textit{hamilton distance $h(u, v)$} the length of a shortest path between
$u$ and $v$ that uses only outer edges. To construct a canonical representation
$G_{\texttt{can}}$ of $G$ replace each vertex $u$ in $H$ with the sequence
$S_u=(h(u, v) : v \in N(u))$, ordered in increasing values. We refer to this as
a \textit{hamilton-distance sequence}. Colbourn and Booth showed that such a
hamilton-distance sequence is canonical, up to the choice of the start vertex
and clockwise or counterclockwise direction of the hamilton cycle. Analogous to
the technique used for constructing a canonical representation for maximal
outerplanar graphs, simply choosing the lexicographically smallest hamilton
distance sequence produces a canonical representation $G_{\texttt{can}}$ of $G$.
This again can easily be achieved using Duval's algorithm
(Lemma~\ref{lem:duval}).

Using a standard representation the sequence requires $\Theta(n \log n)$ bits to
be stored. Contrary to the canonical representation for maximal outerplanar
graphs we can not use self-delimiting number to store each number in the
hamilton distance sequence, as the sum of distances is $\omega(n)$ in some
biconnected outerplanar graphs. Instead, we use a technique based on
balanced parentheses to encode the distances, which we refer to as
\textit{compact hamilton-distance sequence}. Intuitively, each hamilton distance
of a vertex $u$ to a vertex $v$ that is contained in the hamilton-distance
sequence is encoded compactly as a pair of matching parenthesis in a balanced
parenthesis string such that their distance in the string is equal (omitting
negligible additional calculations) to the hamilton distance of $u$ and $v$. A
compact hamilton-distance sequence $P$ that represents a hamilton-distance
sequence $H$ allows us to construct a virtual string $h$ of $H$. We then use $h$
to compute the start index of a lexicographically minimal rotation $H'$ of $H$.
Additional details are delegated to the proof of the following lemma. As we
later require to also encode colored biconnected outerplanar graphs, we
trivially extend Colbourn and Booth's algorithm to allow this.

\begin{lemma}\label{lem:bopcan} Let $G$ be a vertex-colored biconnected
    outerplanar graph such that all colors require $b$ bits total to be stored.
    We can construct and store a canonical representation $G_{\texttt{can}}$ of
    $G$ in $O(n+b)$ time and bits.
\end{lemma}
\begin{proof}
    We start by describing an algorithm to construct a balanced parenthesis
    string $P$ over the universe $\{[, ], (, )\}$ such that $P$ is a
    compact hamilton- distance sequence of $G$. The sequence $P$ is built
    incrementally. Initially $P=\emptyset$. In the following we denote with
    $N^*(u)$ the set of neighbors of a vertex $u \in V$ that are connected to
    $u$ via an inner edge. We first mark each vertex of $V$ as unvisited. To
    construct $P$ we walk around the outer face of $G$ and for each vertex $u
    \in V$ we first add a '$[$' to $P$. Then, for each $v \in N^*(u)$, we do the
    following: if $v$ was already visited during our aforementioned walk, add a
    '$)$' to $P$, otherwise add a '$($' to $P$. Once all neighbors are processed
    this way, add '$]$' to $P$ and mark $u$ as visited. Do this procedure until
    all vertices of $V(G)$ are visited. 

    The runtime of the algorithm is linear in the number of edges of $G$ which
    is $O(n)$. The string $P$ contains $O(n)$ elements over the universe $\{[,
    ], (, )\}$ and thus uses $O(n)$ bits. To mark vertices as visited or
    unvisited a simple bit vector suffices. To handle the colors simple store
    the color of each vertex $u \in V(G)$ in the sequence $P$ directly after
    processing $N^*(u)$. This uses an additional $O(b)$ time and bits.

	Using the data structure of Lemma~\ref{lem:balanced-tree} we can evaluate
	the distance between matching pairs of parenthesis in constant time.
	Recall that the data structure can easily be extended to allow
	different types of parenthesis (Section~\ref{sec:prelim}). This directly
	gives a virtual string representation $h$ of a hamilton distance sequence
	$H$. We then use $h$ to compute the start vertex of a lexicographically
	minimal hamilton distance sequence $H'$ of $H$. Let $u$ be the computed
	start vertex. We then run the algorithm described in the first paragraph
	again, this time starting at vertex $u$. The balanced parenthesis string
	$P'$ computed from this then represents $H'$.

	We must do this entire process twice: once based on a walk around the outer
	face in clockwise and once based on a walk in counterclockwise direction.
	Assume that $P_{\texttt{cw}}$ and $P_{\texttt{cc}}$ are the respectively
	balanced parenthesis strings for each direction. We then choose
	$G_{\texttt{can}}=\texttt{min}\{P_{\texttt{cw}}, P_{\texttt{cc}}\}$. In
	total, this requires $O(n+b)$ time and bits.
\end{proof}

\subsection{Outerplanar Graphs}

We now generalize the previous section from biconnected outerplanar graphs to
connected outerplanar graphs. We first sketch the idea of an isomorphism
algorithm for outerplanar graphs. Roughly speaking, we reduce the problem of
testing the isomorphism of two connected outerplanar graphs $G$ and $G'$ to
testing the isomorphism of the respective block-cut trees $T$ and $T'$. An
algorithm due to Booth and Colbourn~\cite{ColbournBK81} first computes such
block-cut trees, and then color each node that represents a block $B$ with the
hamilton distance sequence of $B$. One can then simply use a colored tree
isomorphism algorithm on the respective colored block-cut trees. Our
space-efficient algorithm works almost analogously: we compute a compact
block-cut tree representation $T$ and $T'$ of $G$ and $G'$, respectively. We
then construct respective colored block-cut trees, stored using a parenthesis
representation together with the canonical representation of
Lemma~\ref{lem:bopcan} as the respective colors assigned to nodes of the tree.
Afterwards, we can use the colored tree isomorphism algorithm of
Theorem~\ref{lm:colored-trees}. In the case that the two given outerplanar
graphs are not connected, we instead use the algorithm 
of Theorem~\ref{thm:forests} for colored forests on
the forest of block-cut trees. Missing details are delegated to the proof.

\begin{theorem}\label{thm:conopiso}
    Let $G$ and $G'$ be two partially embedded outerplanar graphs. We
    can answer if $G$ is isomorphic to $G'$ using $O(n)$ time and bits.
\end{theorem}
\begin{proof}
    We describe the algorithm in three phases. The first two phases are only
    described for the graph $G$. It works analogously for $G'$. We assume that
    neither $G$ nor $G'$ is biconnected, as otherwise we can directly use
    Lemma~\ref{lem:bopcan}.

    \textbf{Finding a canonical root block $B_r$.} We first require to find a
    canonical root block of the block-cut tree. For this we use the following
    simple algorithm. Construct a compact block-cut tree $T$ of $G$ using
    Lemma~\ref{lem:compactbct}. Using Lemma~\ref{lem:select-root} find the
    center of $T$. If the center consists of two nodes, one node must be a block
    node representing some block $B$. This is due to each block node being
    adjacent to only cut node, and vice-versa. Then $B$ is our
    root-block $B_r$. If the center consists of one node $w$, then $w$ might
    either be a cut node or block node. If $w$ is a cut node, we
    introduce a single dummy vertex $u_{\texttt{dummy}}$ to $V(G)$ that is
    adjacent to only the cut vertex $u_w$ represented by node $w$. We then use
    the block $B_r=\{u_{\texttt{dummy}}, u_w\}$ of $G$ as our root block. Note
    that the dummy vertex $u_w$ can be trivially stored using $O(\log n)$ bits,
    together with the newly introduced block $B_r$. In the case that the node
    $w$ is a block node, we simply use the block mapped to $w$ as our root-block
    $B_r$.

    \textbf{Constructing a parenthesis representation.} Assume we have found
    the root block $B_r$ of $G$ as described in the previous paragraph.
    Starting at an arbitrary vertex $u \in B_r$ construct a compact block-cut
    tree $T$ of $G$. Due to our choice of start vertex this block-cut tree is now
    rooted at $B_r$. We iteratively build a parenthesis representation $P$.
    Initially, $P$ is an empty bit vector which we view as a stack. 
    Traverse $T$ in recursive DFS-order and when encountering a node $w$ of $T$ do the
    following:
    \begin{itemize}
        \item $w$ is block node: Let $B$ be the block represented by $w$ and let
            $u \in B$ be the cut vertex represented by the parent node $w'$ of
            $w$ (if it exists). Color the vertices of $B\setminus\{u\}$ that are
            cut vertices in $G$ as \texttt{cut} and the other vertices of
            $B\setminus\{u\}$ as \texttt{normal}. If the vertex $u$ exists,
            color it as \texttt{parent-cut}. Use Lemma~\ref{lem:bopcan} to
            construct a (colored) canonical representation $B_{\texttt{can}}$ of
            $B$. Push an opening parenthesis on $P$ followed by a bit
            indicating that $w$ is a block node and then followed by
            $B_{\texttt{can}}$ and recursively traverse each child node $w''$ of
            $w$. Note that each child node $w''$ of $w$ is a cut node
            representing a cut vertex $v \in B$. We continue the recursion such
            that the ordering of the cut vertices $v$ around the hamilton cycle
            of $B_{\texttt{can}}$ is respected. When returning from the
            recursion, add a closing parenthesis to $P$.
        \item $w$ is a cut vertex node: We add an opening parenthesis to $P$
            followed by a bit indicating that $w$ is a cut vertex.
            Recursively continue for all child nodes $w'$ of $w$ in an arbitrary order.
            When returning from the recursion, add a closing parenthesis to $P$.
    \end{itemize}

    \textbf{Testing for isomorphism.} Assume parenthesis representations $P$
    and $P'$ are constructed for $G$ and $G'$, respectively, as outlined in the
    previous paragraph. We can now directly use Theorem~\ref{lm:colored-trees}
    to test for isomorphism, with a small change: nodes of $P$ (and $P'$) are
    partially ordered, and this ordering must be respected. Nodes
    $w'$ that are children of a block node $w$ have their ordering fixed
	during the construction. To
    accommodate this one must simply skip the sorting the children of $w$ during
    the algorithm of Theorem~\ref{lm:colored-trees}.

    Each phase runs in linear time and uses $O(n)$ bits total. The correctness
    was shown in~\cite{ColbournBK81}.
\end{proof}

\bibliographystyle{elsarticle-num}
\bibliography{main}

\begin{thebibliography}{10}
\expandafter\ifx\csname url\endcsname\relax
  \def\url#1{\texttt{#1}}\fi
\expandafter\ifx\csname urlprefix\endcsname\relax\def\urlprefix{URL }\fi
\expandafter\ifx\csname href\endcsname\relax
  \def\href#1#2{#2} \def\path#1{#1}\fi

\bibitem{KamMS23}
F.~Kammer, J.~Meintrup, A.~Sajenko, Sorting and ranking of self-delimiting numbers with applications to tree isomorphism, in: Proc. of the 34th International Workshop on Combinatorial Algorithms ({IWOCA} 2023), Vol. 13889 of Lecture Notes in Computer Science, Springer, 2023, pp. 356--367.
\newblock \href {https://doi.org/10.1007/978-3-031-34347-6\_30} {\path{doi:10.1007/978-3-031-34347-6\_30}}.

\bibitem{IrnB04}
C.~Irniger, H.~Bunke, Decision tree structures for graph database filtering, in: Structural, Syntactic, and Statistical Pattern Recognition, Springer, 2004, pp. 66--75.
\newblock \href {https://doi.org/10.1007/978-3-540-27868-\_6} {\path{doi:10.1007/978-3-540-27868-\_6}}.

\bibitem{BaiC75}
H.~S. Baird, Y.~E. Cho, An artwork design verification system, in: Proc. 12th conference on Design {(DAC 1975)}, IEEE, 1975, p. 414–420.

\bibitem{AsanoEK13}
T.~Asano, A.~Elmasry, J.~Katajainen, Priority queues and sorting for read-only data, in: Proc. 10th International Conference on Theory and Applications of Models of Computation ({TAMC} 2013), Vol. 7876 of Lecture Notes in Computer Science, Springer, 2013, pp. 32--41.
\newblock \href {https://doi.org/10.1007/978-3-642-38236-9\_4} {\path{doi:10.1007/978-3-642-38236-9\_4}}.

\bibitem{BorC82}
A.~Borodin, S.~A. Cook, A time-space tradeoff for sorting on a general sequential model of computation, {SIAM} J. Comput. 11~(2) (1982) 287--297.
\newblock \href {https://doi.org/10.1137/0211022} {\path{doi:10.1137/0211022}}.

\bibitem{Chan10}
T.~M. Chan, Comparison-based time-space lower bounds for selection, {ACM} Trans. Algorithms 6~(2) (2010) 26:1--26:16.
\newblock \href {https://doi.org/10.1145/1721837.1721842} {\path{doi:10.1145/1721837.1721842}}.

\bibitem{Fre87}
G.~N. Frederickson, Upper bounds for time-space trade-offs in sorting and selection, J. Comput. Syst. Sci. 34~(1) (1987) 19--26.
\newblock \href {https://doi.org/10.1016/0022-0000(87)90002-X} {\path{doi:10.1016/0022-0000(87)90002-X}}.

\bibitem{Hagerup98}
T.~Hagerup, Sorting and searching on the word {RAM}, in: Proc. 15th Annual Symposium on Theoretical Aspects of Computer Science ({STACS} 98), Vol. 1373 of Lecture Notes in Computer Science, Springer, 1998, pp. 366--398.
\newblock \href {https://doi.org/10.1007/BFb0028575} {\path{doi:10.1007/BFb0028575}}.

\bibitem{IsaacS56}
E.~J. Isaac, R.~C. Singleton, Sorting by address calculation, J. {ACM} 3~(3) (1956) 169--174.
\newblock \href {https://doi.org/10.1145/320831.320834} {\path{doi:10.1145/320831.320834}}.

\bibitem{Knuth98a}
D.~E. Knuth, The art of computer programming, , Volume III, 2nd Edition, Addison-Wesley, 1998.

\bibitem{PagR98}
J.~Pagter, T.~Rauhe, Optimal time-space trade-offs for sorting, in: Proc. 39th Annual Symposium on Foundations of Computer Science {(FOCS 1998)}, {IEEE} Computer Society, 1998, pp. 264--268.
\newblock \href {https://doi.org/10.1109/SFCS.1998.743455} {\path{doi:10.1109/SFCS.1998.743455}}.

\bibitem{Bea91}
P.~Beame, A general sequential time-space tradeoff for finding unique elements, {SIAM} J. Comput. 20~(2) (1991) 270--277.

\bibitem{PagP02}
R.~Pagh, J.~Pagter, Optimal time-space trade-offs for non-comparison-based sorting, in: Proc. 13th Annual {ACM-SIAM} Symposium on Discrete Algorithms {(SODA 2002)}, {ACM/SIAM}, 2002, pp. 9--18.

\bibitem{Han18}
Y.~Han, Sorting real numbers in {O(n{\(\surd\)}(log n))} time and linear space, Algorithmica 82~(4) (2020) 966--978.
\newblock \href {https://doi.org/10.1007/s00453-019-00626-0} {\path{doi:10.1007/s00453-019-00626-0}}.

\bibitem{RamRR02}
R.~Raman, V.~Raman, S.~S. Rao, Succinct indexable dictionaries with applications to encoding k-ary trees and multisets, in: Proc. 13th Annual {ACM-SIAM} Symposium on Discrete Algorithms {(SODA 2002)}, {ACM/SIAM}, 2002, pp. 233--242.

\bibitem{AsaIKKOOSTU14}
T.~Asano, T.~Izumi, M.~Kiyomi, M.~Konagaya, H.~Ono, Y.~Otachi, P.~Schweitzer, J.~Tarui, R.~Uehara, Depth-first search using {O(n)} bits, in: Proc. 25th International Symposium on Algorithms and Computation {(ISAAC 2014)}, Vol. 8889 of {LNCS}, Springer, 2014, pp. 553--564.
\newblock \href {https://doi.org/10.1007/978-3-319-13075-0\_44} {\path{doi:10.1007/978-3-319-13075-0\_44}}.

\bibitem{BanerjeeNCSRVSR18}
N.~Banerjee, S.~Chakraborty, V.~Raman, S.~R. Satti, Space efficient linear time algorithms for bfs, dfs and applications, Theor. Comp. Sys. 62~(8) (2018) 1736–1762.
\newblock \href {https://doi.org/10.1007/s00224-017-9841-2} {\path{doi:10.1007/s00224-017-9841-2}}.

\bibitem{ElmHK15}
A.~Elmasry, T.~Hagerup, F.~Kammer, Space-efficient basic graph algorithms, in: Proc. 32nd International Symposium on Theoretical Aspects of Computer Science {(STACS 2015)}, Vol.~30 of LIPIcs, Schloss Dagstuhl -- Leibniz-Zentrum f\"ur Informatik, 2015, pp. 288--301.
\newblock \href {https://doi.org/10.4230/LIPIcs.STACS.2015.288} {\path{doi:10.4230/LIPIcs.STACS.2015.288}}.

\bibitem{Hag18}
T.~Hagerup, Space-efficient {DFS} and applications to connectivity problems: Simpler, leaner, faster, Algorithmica 82~(4) (2020) 1033--1056.
\newblock \href {https://doi.org/10.1007/s00453-019-00629-x} {\path{doi:10.1007/s00453-019-00629-x}}.

\bibitem{BarAHM12}
J.~Barbay, L.~C. Aleardi, M.~He, J.~I. Munro, Succinct representation of labeled graphs, Algorithmica 62~(1) (2012) 224--257.
\newblock \href {https://doi.org/10.1007/s00453-010-9452-7} {\path{doi:10.1007/s00453-010-9452-7}}.

\bibitem{HagKL19}
T.~Hagerup, F.~Kammer, M.~Laudahn, Space-efficient euler partition and bipartite edge coloring, Theor. Comput. Sci. 754 (2019) 16--34.
\newblock \href {https://doi.org/10.1016/j.tcs.2018.01.008} {\path{doi:10.1016/j.tcs.2018.01.008}}.

\bibitem{KamM23}
F.~Kammer, J.~Meintrup, Succinct planar encoding with minor operations, in: Proc. 34th International Symposium on Algorithms and Computation, ({ISAAC} 2023), Vol. 283 of LIPIcs, Schloss Dagstuhl - Leibniz-Zentrum f{\"{u}}r Informatik, 2023, pp. 44:1--44:18.
\newblock \href {https://doi.org/10.4230/LIPICS.ISAAC.2023.44} {\path{doi:10.4230/LIPICS.ISAAC.2023.44}}.

\bibitem{ChoGS18}
J.~Choudhari, M.~Gupta, S.~Sharma, Nearly optimal space efficient algorithm for depth first search, CoRR abs/1810.07259 (2018).
\newblock \href {http://arxiv.org/abs/1810.07259} {\path{arXiv:1810.07259}}.

\bibitem{DattaBK16}
S.~Datta, R.~Kulkarni, A.~Mukherjee, Space-efficient approximation scheme for maximum matching in sparse graphs, in: Proc. 41st International Symposium on Mathematical Foundations of Computer Science {(MFCS 2016)}, Vol.~58 of LIPIcs, Schloss Dagstuhl -- Leibniz-Zentrum f\"ur Informatik, 2016, pp. 28:1--28:12.
\newblock \href {https://doi.org/10.4230/LIPIcs.MFCS.2016.28} {\path{doi:10.4230/LIPIcs.MFCS.2016.28}}.

\bibitem{IzumiO20}
T.~Izumi, Y.~Otachi, {Sublinear-Space Lexicographic Depth-First Search for Bounded Treewidth Graphs and Planar Graphs}, in: 47th International Colloquium on Automata, Languages, and Programming (ICALP 2020), Vol. 168, 2020, pp. 67:1--67:17.
\newblock \href {https://doi.org/10.4230/LIPIcs.ICALP.2020.67} {\path{doi:10.4230/LIPIcs.ICALP.2020.67}}.

\bibitem{KammerM22}
F.~Kammer, J.~Meintrup, Space-efficient graph coarsening with applications to succinct planar encodings, in: 33rd International Symposium on Algorithms and Computation, {ISAAC} 2022, Vol. 248, 2022, pp. 62:1--62:15.
\newblock \href {https://doi.org/10.4230/LIPIcs.ISAAC.2022.62} {\path{doi:10.4230/LIPIcs.ISAAC.2022.62}}.

\bibitem{KammerMS22}
F.~Kammer, J.~Meintrup, A.~Sajenko, Space-efficient vertex separators for treewidth, Algorithmica 84~(9) (2022) 2414--2461.
\newblock \href {https://doi.org/10.1007/s00453-022-00967-3} {\path{doi:10.1007/s00453-022-00967-3}}.

\bibitem{HeegerHKNRS21}
K.~Heeger, A.-S. Himmel, F.~Kammer, R.~Niedermeier, M.~Renken, A.~Sajenko, Multistage graph problems on a global budget, Theoretical Computer Science 868 (2021) 46--64.
\newblock \href {https://doi.org/https://doi.org/10.1016/j.tcs.2021.04.002} {\path{doi:https://doi.org/10.1016/j.tcs.2021.04.002}}.

\bibitem{KamS18g}
F.~Kammer, A.~Sajenko, Space efficient (graph) algorithms, \url{https://github.com/thm-mni-ii/sea} (2018).

\bibitem{AhoHU74}
A.~V. Aho, J.~E. Hopcroft, J.~D. Ullman, The Design and Analysis of Computer Algorithms, Addison-Wesley, 1974.

\bibitem{Lindell92}
S.~Lindell, A logspace algorithm for tree canonization (extended abstract), in: Proc. 24th Annual {ACM} Symposium on Theory of Computing ({STOC 1992}), {ACM}, 1992, pp. 400--404.
\newblock \href {https://doi.org/10.1145/129712.129750} {\path{doi:10.1145/129712.129750}}.

\bibitem{ColbournBK81}
C.~J. Colbourn, K.~S. Booth, Linear time automorphism algorithms for trees, interval graphs, and planar graphs, SIAM Journal on Computing 10~(1) (1981) 203--225.
\newblock \href {https://doi.org/10.1137/0210015} {\path{doi:10.1137/0210015}}.

\bibitem{MitBJ79}
T.~Beyer, W.~Jones, S.~L. Mitchell, Linear algorithms for isomorphism of maximal outerplanar graphs, J. {ACM} 26~(4) (1979) 603--610.
\newblock \href {https://doi.org/10.1145/322154.322155} {\path{doi:10.1145/322154.322155}}.

\bibitem{BauH17}
T.~Baumann, T.~Hagerup, Rank-select indices without tears, in: Proc. 16th International Symposium on Algorithms and Data Structures ({WADS} 2019), Vol. 11646 of Lecture Notes in Computer Science, Springer, 2019, pp. 85--98.
\newblock \href {https://doi.org/10.1007/978-3-030-24766-9\_7} {\path{doi:10.1007/978-3-030-24766-9\_7}}.

\bibitem{KamKL19}
F.~Kammer, D.~Kratsch, M.~Laudahn, Space-efficient biconnected components and recognition of outerplanar graphs, Algorithmica 81~(3) (2019) 1180--1204.
\newblock \href {https://doi.org/10.1007/s00453-018-0464-z} {\path{doi:10.1007/s00453-018-0464-z}}.

\bibitem{Hag19}
T.~Hagerup, {A Constant-Time Colored Choice Dictionary with Almost Robust Iteration}, in: Proc. 44th International Symposium on Mathematical Foundations of Computer Science (MFCS 2019), Vol. 138 of {LIPIcs}, Schloss Dagstuhl -- Leibniz-Zentrum f\"ur Informatik, Dagstuhl, Germany, 2019, pp. 64:1--64:14.
\newblock \href {https://doi.org/10.4230/LIPIcs.MFCS.2019.64} {\path{doi:10.4230/LIPIcs.MFCS.2019.64}}.

\bibitem{KamS18c}
F.~Kammer, A.~Sajenko, Simple $2^f$-color choice dictionaries, in: Proc. 29th International Symposium on Algorithms and Computation ({ISAAC} 2018), Vol. 123 of LIPIcs, Schloss Dagstuhl -- Leibniz-Zentrum f\"ur Informatik, 2018, pp. 66:1--66:12.
\newblock \href {https://doi.org/10.4230/LIPIcs.ISAAC.2018.66} {\path{doi:10.4230/LIPIcs.ISAAC.2018.66}}.

\bibitem{Elias74}
P.~Elias, Efficient storage and retrieval by content and address of static files, J. {ACM} 21~(2) (1974) 246--260.
\newblock \href {https://doi.org/10.1145/321812.321820} {\path{doi:10.1145/321812.321820}}.

\bibitem{Jac89}
G.~Jacobson, Space-efficient static trees and graphs, in: Proc. 30th Annual Symposium on Foundations of Computer Science {(FOCS 1989)}, {IEEE} Computer Society, 1989, pp. 549--554.
\newblock \href {https://doi.org/10.1109/SFCS.1989.63533} {\path{doi:10.1109/SFCS.1989.63533}}.

\bibitem{DBLP:journals/tcs/KatohG22}
T.~Katoh, K.~Goto, In-place initializable arrays, Theor. Comput. Sci. 916 (2022) 62--69.
\newblock \href {https://doi.org/10.1016/j.tcs.2022.03.004} {\path{doi:10.1016/j.tcs.2022.03.004}}.

\bibitem{CorLRS09}
T.~H. Cormen, C.~E. Leiserson, R.~L. Rivest, C.~Stein, Introduction to Algorithms, 3rd Edition, {MIT} Press, 2009.

\bibitem{MunR97}
J.~I. Munro, V.~Raman, Succinct representation of balanced parentheses, static trees and planar graphs, in: Proc. 38th Annual Symposium on Foundations of Computer Science, {(FOCS 1997)}, {IEEE} Computer Society, 1997, pp. 118--126.
\newblock \href {https://doi.org/10.1109/SFCS.1997.646100} {\path{doi:10.1109/SFCS.1997.646100}}.

\bibitem{Navarro16}
G.~Navarro, Compact Data Structures: A Practical Approach, Cambridge University Press, 2016.
\newblock \href {https://doi.org/10.1017/CBO9781316588284} {\path{doi:10.1017/CBO9781316588284}}.

\bibitem{Buss97}
S.~Buss, Alogtime algorithms for tree isomorphism, comparison, and canonization, in: Proc. 5th Kurt Godel Colloquium on Computational Logic and Proof Theory {(KGC 1997)}, Vol. 1289 of {LNCS}, Springer, 1997, pp. 18--33.

\bibitem{he_et_al:LIPIcs:2020:13369}
M.~He, J.~I. Munro, Y.~Nekrich, S.~Wild, K.~Wu, {Distance Oracles for Interval Graphs via Breadth-First Rank/Select in Succinct Trees}, in: 31st International Symposium on Algorithms and Computation (ISAAC 2020), Vol. 181 of Leibniz International Proceedings in Informatics (LIPIcs), Schloss Dagstuhl--Leibniz-Zentrum f{\"u}r Informatik, Dagstuhl, Germany, 2020, pp. 25:1--25:18.
\newblock \href {https://doi.org/10.4230/LIPIcs.ISAAC.2020.25} {\path{doi:10.4230/LIPIcs.ISAAC.2020.25}}.

\bibitem{DBLP:journals/algorithmica/FarzanM14}
A.~Farzan, J.~I. Munro, A uniform paradigm to succinctly encode various families of trees, Algorithmica 68~(1) (2014) 16--40.
\newblock \href {https://doi.org/10.1007/s00453-012-9664-0} {\path{doi:10.1007/s00453-012-9664-0}}.

\bibitem{Harary91}
F.~Harary, Graph theory, Addison-Wesley, 1991.

\bibitem{Duval83}
J.~{Pierre Duval}, Factorizing words over an ordered alphabet, Journal of Algorithms 4~(4) (1983) 363--381.
\newblock \href {https://doi.org/10.1016/0196-6774(83)90017-2} {\path{doi:10.1016/0196-6774(83)90017-2}}.

\end{thebibliography}

\end{document}